\newtheorem{theorem}{Theorem}
\newtheorem{lemma}{Lemma}
\newtheorem{definition}{Definition}
\newcommand{\game}{\mathsf{G}}
\newcommand{\UBV}{\overline{V}}
\newcommand{\LBV}{\underline{V}}
\newcommand{\AUBV}{\tilde{V}}
\newcommand{\ALBV}{\undertilde{V}}
\newcommand{\dom}{Dom}
\newcommand{\optval}{\textsf{OPT}}
\DeclareMathOperator*{\argmax}{arg\,max}
\DeclareMathOperator*{\argmin}{arg\,min}
\title{Function Approximation for Solving Stackelberg Equilibrium in Large Perfect Information Games}
\author {
    Chun Kai Ling,
    J. Zico Kolter,
    Fei Fang
}
\begin{document}

\maketitle

\begin{abstract}
Function approximation (FA) has been a critical component in solving large zero-sum games.
    Yet, little attention has been given towards FA in solving \textit{general-sum} extensive-form games, despite them being widely regarded as being computationally more challenging than their fully competitive  or cooperative counterparts. 
    A key challenge is that for many equilibria in general-sum games, no simple analogue to the state value function used in Markov Decision Processes and zero-sum games exists.
    In this paper, we propose learning the \textit{Enforceable Payoff Frontier} (EPF)---a generalization of the state value function for general-sum games.
    We approximate the optimal \textit{Stackelberg extensive-form correlated equilibrium} by representing EPFs with neural networks and training them by using appropriate backup operations and loss functions. 
    This is the first method that applies FA to the Stackelberg setting, allowing us to scale to much larger games while still enjoying performance guarantees based on FA error. Additionally, our proposed method guarantees incentive compatibility and is easy to evaluate without having to depend on self-play or approximate best-response oracles.
\end{abstract}

\frenchspacing

\section{Introduction}
\label{sec:intro}
A central challenge in modern game solving is to handle large game trees, particularly those too large to traverse or even specify. These include board games like Chess, Poker \cite{silver2018general,silver2016mastering,brown2017libratus,brown2019superhuman,moravvcik2017deepstack,bakhtin2021no,gray2020human} and modern video games with large state and action spaces \cite{alphastarblog}.
Today, scalable game solving is frequently achieved via function approximation (FA), typically by using neural networks to model state values and harnessing the network's ability to generalize its evaluation to states never encountered before \cite{silver2016mastering,silver2018general,moravvcik2017deepstack,schmid2021player}. Methods employing FA have achieved not only state-of-the-art performance, but also exhibit more human-like behavior \cite{kasparov2018chess}.

Surprisingly, FA is rarely applied to solution concepts used in general-sum games such as Stackelberg equilibrium, which are generally regarded as being more difficult to solve than the perfectly cooperative/competitive Nash equilibrium. Indeed, the bulk of existing literature centers around on methods such as exact backward induction \cite{DBLP:journals/corr/BosanskyBHMS15, bovsansky2017computation}, incremental strategy generation \cite{vcerny2018incremental,cermak2016using,karwowski2020double}, and mathematical programming \cite{bosansky2015sequence}.\footnote{Meta-game solving \cite{lanctot2017unified,wang2019deep} is used in zero-sum games, but not general-sum Stackelberg games.} While exact, these methods rarely scale to large game trees, especially those too large to traverse, severely limiting our ability to tackle general-sum games that are of practical interest, such as those in security domains like wildlife poaching prevention \cite{fang2017paws} and airport patrols \cite{pita2008armor}. 
For non-Nash equilibrium in general-sum games, the value of a state often \textit{cannot} be summarized as a scalar (or fixed sized vector), rendering the direct application of FA-based zero-sum solvers like \cite{silver2018general} infeasible. 

In this paper, we propose applying FA to model the \textit{Enforceable Payoff Frontier} (EPF) for each state and using it to solve for the \textit{Stackelberg extensive-form correlated equilibrium} (SEFCE) in two-player games of perfect information. Introduced in \cite{bovsansky2017computation,DBLP:journals/corr/BosanskyBHMS15,letchford2010computing}, EPFs capture the tradeoff between player payoffs and is analogous to the state value in zero-sum games.\footnote{The idea of an EPF was initially used by \cite{letchford2010computing} to give a polynomial time solution for SSEs. However, they (as well as other work) do not propose any naming.} 
Specifically, we (i) study the pitfalls that can occur with using FA in general-sum games, (ii) propose a method for solving SEFCEs by modeling EPFs using neural networks and minimizing an appropriately designed Bellman-like loss, and (iii) provide guarantees on incentive compatibility and performance of our method. Our approach is the first application of FA in Stackelberg settings without relying on best-response oracles for performance guarantees. Experimental results show that our method can (a) approximate solutions in games too large to explicitly traverse, and (b) generalize learned EPFs over states in a repeated setting where game payoffs vary based on features. 

\section{Preliminaries and Notation}
\label{sec:background}
A 2-player \textit{perfect information} game $\game$ 
is represented by a finite game tree with game states $s \in \mathcal{S} $ given by vertices and action space $\mathcal{A}(s)$ given by directed edges starting from $s$. Each state belongs to either player $\mathsf{P}_1$ or $\mathsf{P}_2$; we denote these disjoint sets by $\mathcal{S}_1$ and $\mathcal{S}_2$ respectively.
Every leaf (terminal state) $\ell \in \mathcal{L} \subseteq \mathcal{S}$ of $\game$ is associated with payoffs, given by $r_i(\ell)$ for each player $i$. 
Taking action $a \in \mathcal{A}(s)$ at state $s \not \in \mathcal{L}$ leads to $s'=T(a; s)$, where $s' \in \mathcal{S}$ is the next state and $T$ is the deterministic transition function. Let $\mathcal{C}(s) = \{ s' \mid T(a; s) = s', a \in \mathcal{A}(s)\}$ denote the immediate children of $s$.
We say that state $s$ precedes ($\sqsubset$) state $s'$ if $s \neq s'$ and $s$ is an ancestor of $s'$ in $\game$, and write $\sqsubseteq$ if allowing $s=s'$. 
An action $a \in \mathcal{A}(s)$ \textit{leads to} $s'$ if $s \sqsubset s'$ and $T(a; s) \sqsubseteq s'$. 
With a slight abuse of notation, we denote $T(a; s) \sqsubseteq s'$ by $a \sqsubset s'$ or $(s, a) \sqsubset s'$. Since $\game$ is a tree, for states $s, s'$ where $s \sqsubset s'$, exactly one $a \in \mathcal{A}(s)$ such that $(s, a) \sqsubseteq s'$. We use the notation $\sqsupseteq$ and $\sqsupset$ when the relationships are reversed.

A strategy $\pi_i$, where $i \in \{ \mathsf{P}_1, \mathsf{P}_2 \}$, is a mapping from state $s \in \mathcal{S}_i$ to a distribution over actions $\mathcal{A}(s)$, i.e., $\sum_{a \in \mathcal{A}(s)} \pi_i(a; s) = 1$. 
Given strategies $\pi_1$ and $\pi_2$, the probability of reaching $\ell \in \mathcal{L}$ starting from $s$ is 
given by $p(\ell| s ; \pi_1, \pi_2) = \prod_{i \in \{ \mathsf{P}_1, \mathsf{P}_2 \} } \prod_{(s',a); s \sqsubseteq s', (s',a) \sqsubset \ell, s' \in \mathcal{S}_i} \pi_i(a; s')$, and player $i$'s expected payoff starting from $s$ is $R_i(s ; \pi_1, \pi_2) = \sum_{\ell \in \mathcal{L}} p(\ell | s ; \pi_1, \pi_2) r_i(\ell)$. 
We use as shorthand $p(\ell; \pi_1, \pi_2)$ and $R_i(\pi_1, \pi_2)$ if $s$ is the root. 
A strategy $\pi_{2}$ is a \textit{best response} to a strategy $\pi_{1}$ if 
$R_{2}(\pi_1, \pi_{2}) \geq R_{2}(\pi_1, \pi'_{2})$ for all strategies $\pi'_{2}$. The set of best responses to $\pi_1$ is written as $BRS_2(\pi_1)$. 

The \textit{grim strategy} $\argmin_{\pi_1} \max_{\pi_2} R_2(\pi_1, \pi_2)$ of $\mathsf{P}_1$ towards $\mathsf{P}_2$ is one which guarantees the lowest payoff for $\mathsf{P}_2$. Conversely, the \textit{joint altruistic strategy} $\argmax_{\pi_1, \pi_2} R_2(\pi_1, \pi_2)$ is one which maximizes $\mathsf{P}_2$'s payoff.
We restrict grim and altruistic strategies to those which are subgame-perfect, i.e., they remain the optimal if the game was rooted at some other state.\footnote{This is to avoid strategies which play arbitrarily at states which have $0$ probability of being reached.}
Grim and altruistic strategies ignore $\mathsf{P}_1$'s own payoffs and can be computed by backward induction. For each state, we denote by $\underline{V}(s)$ and $\overline{V}(s)$ the internal values of $\mathsf{P}_2$ for grim and altruistic strategies obtained via backward induction.

\subsection{Stackelberg Equilibrium in Perfect Information Games}
In a Strong Stackelberg equilibrium (SSE), there is a distinguished \textit{leader} and \textit{follower}, which we assume are $\mathsf{P}_1$ and $\mathsf{P}_2$ respectively. The leader \textit{commits} to any strategy $\pi_1$ and the follower best responds to the leader, breaking ties by selecting $\pi_2 \in BRS_2(\pi_1)$ such as to benefit the leader. 
\footnote{Commitment rights are justified by repeated interactions. If the $\mathsf{P}_1$ reneges on its commitment, $\mathsf{P}_2$ plays another best response, which is detrimental to the leader. This setting is unlike \cite{de2020strategic} which uses binding agreements.}
Solving for the SSE entails finding the optimal commitment for the \textit{leader}, i.e., a pair $\pi =(\pi_1, \pi_2)$ such that $\pi_2 \in BRS_2(\pi_1)$ and $R_1(\pi_1, \pi_2)$ is to be maximized. 

\begin{figure*}[t]
    \centering
    \begin{subfigure}[b]{.24\textwidth}
    \centering
    \begin{tikzpicture}[scale=1,font=\footnotesize]
\tikzstyle{leader}=[regular polygon,regular polygon sides=50,draw,inner sep=1.2,minimum size=12];
\tikzstyle{follower}=[regular polygon,regular polygon sides=50, rotate=180,draw,inner sep=1.2, minimum size=12, fill=gray!65];
\tikzstyle{terminal}=[draw,inner sep=1.2];
\tikzstyle{level 1}=[level distance=9mm,sibling distance=20mm]
\tikzstyle{level 2}=[level distance=5mm,sibling distance=12mm]

\node(0)[follower, label=center:{$s$}]{}
    child{node(1)[leader, label=center:{$s'$}]{}
        child{node(3)[terminal, label=below:{$(10,-1)$}]{}
        }
        child{node(4)[terminal, label=below:{$(-1,1)$}]{}
        }
        edge from parent node[above left]{stay}
    }
    child{node(2)[terminal, label=below:{$(k_1,k_2)$}]{}
    edge from parent node[above right]{exit}
    };
\end{tikzpicture}
    \caption{Toy example.}
    \label{fig:basic_eg}
    \end{subfigure}
    \begin{subfigure}[b]{.24\linewidth}
    \centering
    \includegraphics[width=\textwidth]{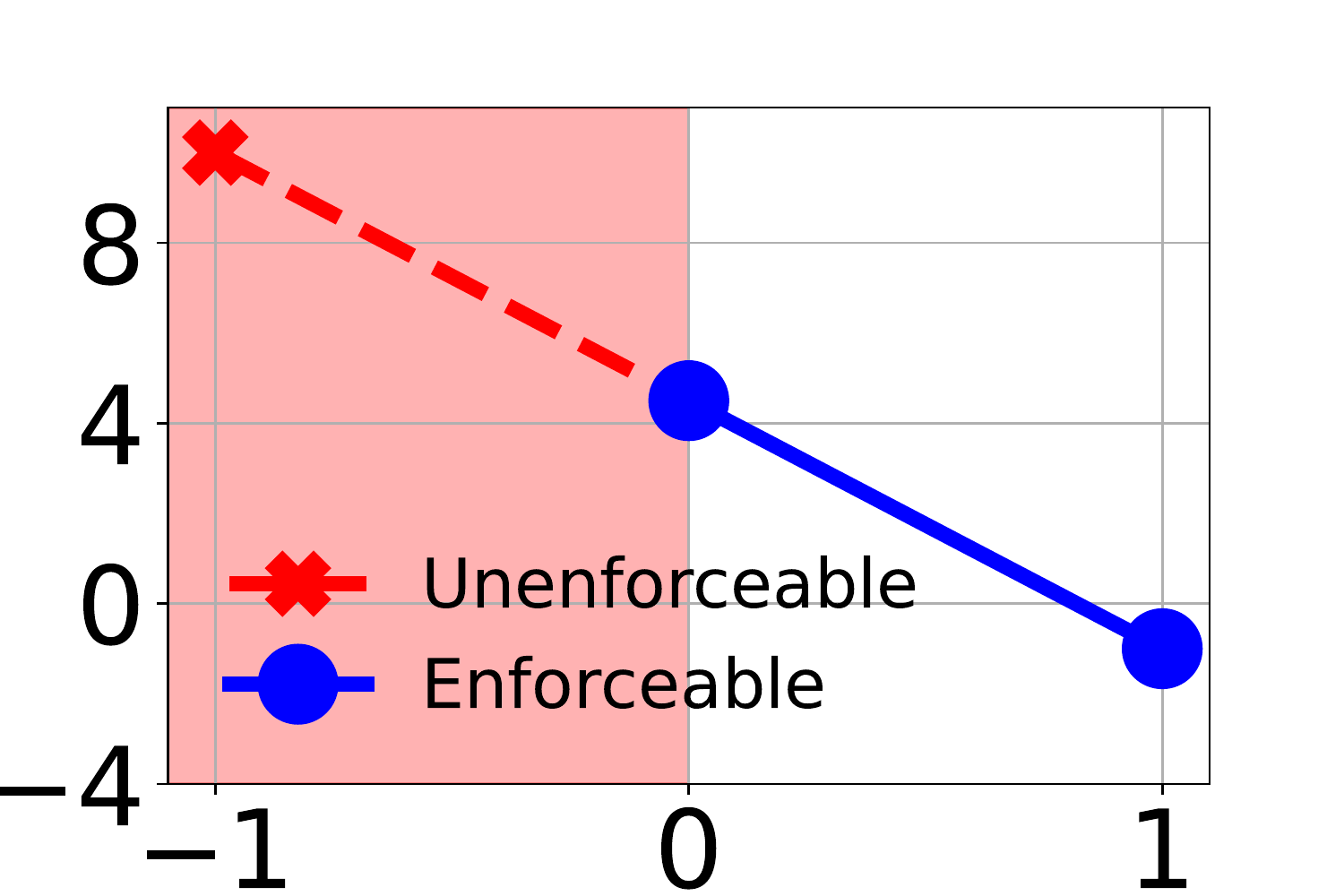}
    \subcaption{EPF at vertex $s'$.}
    \label{fig:demo-2}
    \end{subfigure}
    \begin{subfigure}[b]{.24\linewidth}
    \centering
    \includegraphics[width=\textwidth]{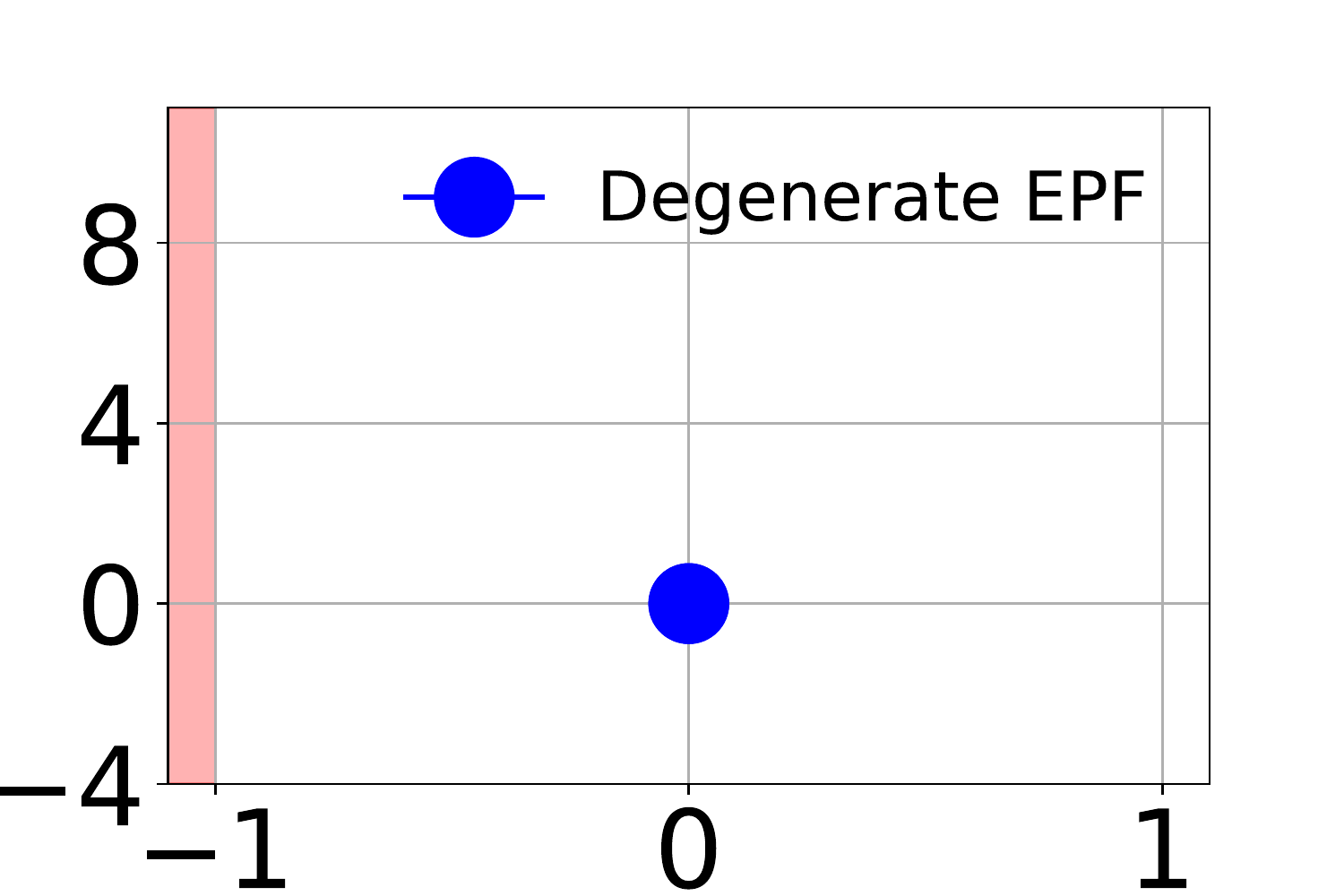}
    \subcaption{EPF at vertex after exiting.}
    \label{fig:demo-3}
    \end{subfigure}
    \begin{subfigure}[b]{.24\linewidth}
    \centering
    \includegraphics[width=\textwidth]{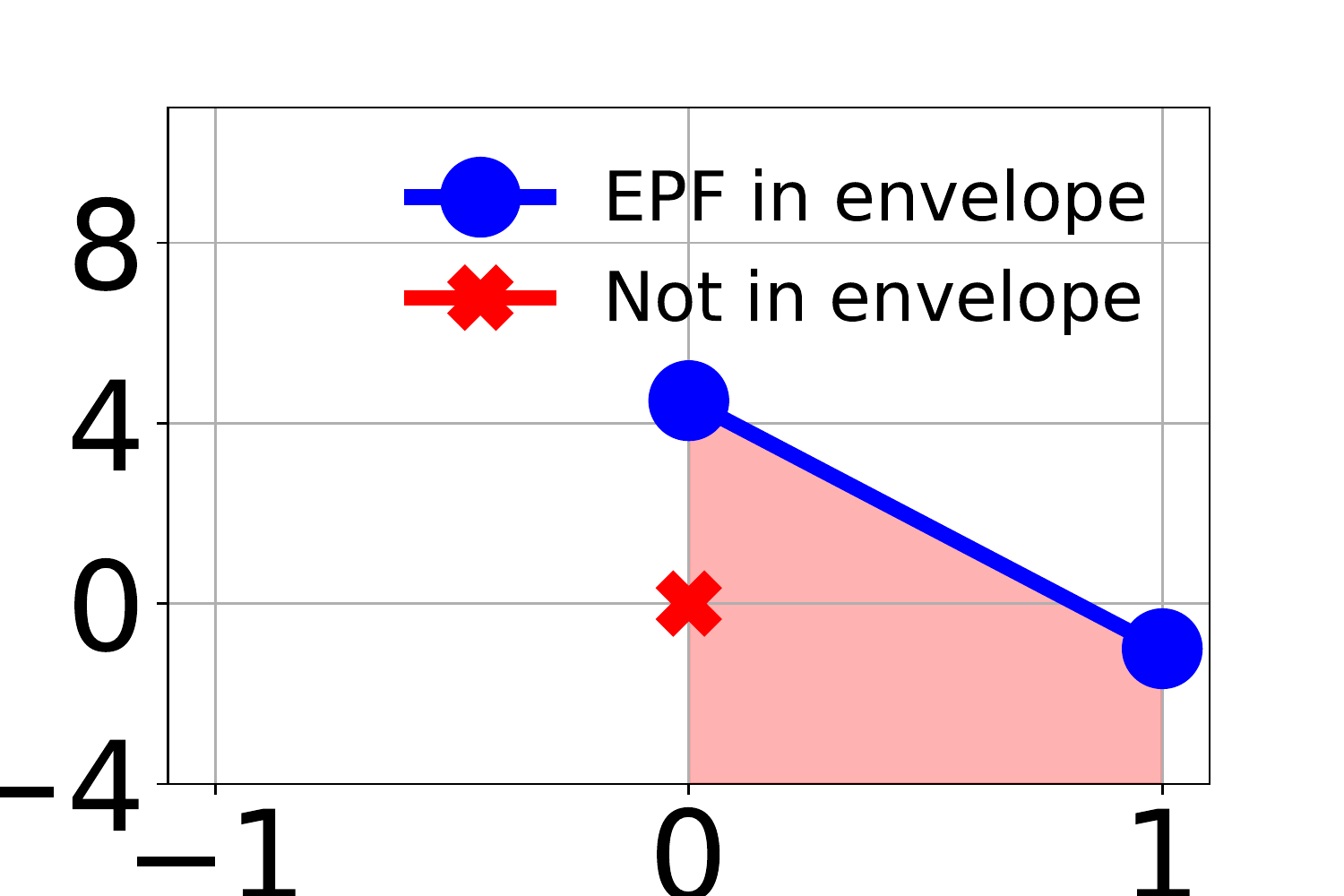}
    \subcaption{EPF at root vertex $s$.}
    \label{fig:demo-4}
    \end{subfigure}
    \caption{(a) Game tree to illustrate computation of SEFCE. Leader \tikz\draw[black] (0,0) circle (.7ex);, follower  \tikz\draw[black,fill=gray] (0,0) circle (.7ex); and leaf $\square$ states are vertices and edges are actions. 
    (b-d) EPFs at $s'$, after exiting and $s$. The x and y axes are follower ($\mu_2$) and leader payoffs ($U_s(\mu)$). In (b) the pink regions give $\mathsf{P}_2$ too little and are truncated. In (d), the pink regions are not part of the upper concave envelope and removed.}
\end{figure*}

It is well-known that the optimal SSE will perform no worse (for the leader) than Nash equilibrium, and often much better. Consider the game in Figure~\ref{fig:basic_eg} with $k_1=k_2=0$. If the expected follower payoff from staying is less than $0$, then it would exit immediately. Hence, solutions such as the subgame perfect Nash gives a leader payoff of $0$. The optimal Stackelberg solution is for the leader to commit to a uniform strategy---this ensures that staying yields the follower a payoff of $0$, which under the tie-breaking assumptions of SSE nets the leader a payoff of $4.5$. 

\paragraph{Stackelberg Extensive-Form Correlated Equilibrium} 
For this paper, we will focus on a relaxation of the SSE known as the Stackelberg extensive-form correlated equilibirum (SEFCE), which allows the leader to explicitly recommend actions to the follower \textit{at the time of decision making}. If the follower deviates from the recommendation, the leader is free to retaliate---typically with the grim strategy. 
In a SEFCE, $\mathsf{P}_1$ takes and recommends actions to maximize its reward, subject to the constraints that the recommendations are sufficiently appealing to $\mathsf{P}_2$ relative to threat of $\mathsf{P}_2$ facing the grim strategy after any potential deviation.
\begin{definition}[Minimum required incentives]
Given $s \in \mathcal{S}_2$, $s'\in\mathcal{C}(s)$,
we define the \textit{minimum required incentive} $\tau(s')=\max_{s^! \in \mathcal{C}(s); s^! \neq s'} \LBV(s^!)$, i.e., the minimum amount that $\mathsf{P}_1$ needs to \textit{promise} $\mathsf{P}_2$ under $s'$ for it to be reached. 
\label{def:approx_min_req_incentive}
\end{definition}
\begin{definition}[SEFCE]
A strategy pair $\pi=(\pi_1, \pi_2)$ is a SEFCE if it is incentive compatible, i.e., for all $s \in \mathcal{S}_2$, $a \in \mathcal{A}(s)$, $\pi_2(a;s) > 0 \Longrightarrow R_2(T(a; s); \pi_1, \pi_2) \geq \tau(T(a; s))$. Additionally, $\pi$ is optimal if $R_1(\pi_1, \pi_2)$ is maximized.
\label{def:sefce}
\end{definition}
In Section~\ref{sec:review}, we describe how optimal SEFCE can be computed in polynomial time for perfect information games. 


\subsection{Function Approximation of State Values}
When finding Nash equilibrium in perfect information games, the \textit{value} $v_s$  of a state is a crucial quantity which summarizes the utility obtained from $s$ onward, assuming optimal play from all players. It contains sufficient information for one to obtain an optimal solution after using them to `replace' subtrees. Typically $v_s$ should only rely on states $s' \sqsupseteq s$. In zero-sum games, $v_s=\underline{V}_s$ while in cooperative games, $v_s=\overline{V}_s$. Knowing the true value of each state immediately enables the optimal policy via one-step lookahead. While $v_s$ can be computed over all states by backward induction, this is not feasible when $\game$ is large. A standard workaround is to replace $v_s$ with an approximate $\tilde{v}_s$ which is then used in tandem with some search algorithm (depth-limited search, Monte-Carlo tree search, etc.) to obtain an approximate solution. 
Today, $\tilde{v}_s$ is often \textit{learned}. By representing $\tilde{v}$ with a rich function class over state features (typically using a neural network), modern solvers are able to generalize $\tilde{v}$ across large state spaces without explicitly visiting every state, thus scaling to much larger games.

\paragraph{Fitted Value Iteration.} A class of methods closely related to ours is Fitted Value Iteration (FVI) \cite{lagoudakis2003least,dietterich2001batch,munos2008finite}. The idea behind FVI is to optimize for parameters such as to minimize the \textit{Bellman loss} over sampled states by treating it as a regular regression problem. \footnote{We distinguish RL and FVI in that the transition function is known explicitly and made used of in FVI.} Here, the Bellman loss measures the distance between $\tilde{v}_s$ and the estimated value using one-step lookahead using $\tilde{v}$. If this distance is $0$ for all $s$, then $\tilde{v}$ matches the optimal $v$. In practice, small errors in FA accumulate and cascade across states, lowering performance. Thus, it is important to bound performance as a function of the Bellman loss over all $s$.

\subsection{Related Work}
Some work has been done in generalizing state values in general-sum games, but few involve learning them. Related to ours is \cite{murray2007finding,macdermed2011quick,dermed2013value}, which approximate the achievable set of payoffs for correlated equilibrium, and eventually SSE \cite{letchford2012computing} in stochastic games. These methods are analytical in nature and scale poorly. \cite{perolat2017learning,greenwald2003correlated} propose a Q-learning-like algorithm over general-sum Markov games, but do not apply FA and only consider stationary strategies which preclude strategies involving long range threats like the SSE. \cite{zinkevich2005cyclic} show a class of general-sum Markov games where value-iteration like methods will necessarily fail. \cite{zhong2021can} study reinforcement learning in the Stackelberg setting, but only consider followers with myopic best responses. \cite{castelletti2011multi} apply FVI in a multiobjective setting, but do not consider the issue of incentive compatibility. 
Another approach is to apply reinforcement learning and self-play \cite{leibo2017multi}. Recent methods account for the nonstationary environment each player faces during training \cite{foerster2017learning,perolat2022mastering}; however they have little game theoretical guarantees in terms of incentive compatibility, particularly in non zero-sum games. 


\section{Review: Solving SEFCE via Enforceable Payoff Frontiers}
\label{sec:review}
In Section~\ref{sec:background}, we emphasized the importance of the value function $v$ in solving zero-sum games.
In this section, we review the analogue for SEFCE in the general-sum games, which we term as \textit{Enforceable Payoff Frontiers} (EPF). Introduced in \cite{letchford2010computing}, the EPF at state $s$ is a \textit{function} $U_s: \mathbb{R}\mapsto \mathbb{R}\cup\{ -\infty\}$, such that $U_s(\mu_2)$ gives the maximum leader payoff for a SEFCE for a game rooted at $s$, on condition that $\mathsf{P}_2$ obtains a payoff of $\mu_2$.
All leaves $s \in \mathcal{L}$ have degenerate EPFs $U_s(r_2(s))=r_1(s)$ and $-\infty$ everywhere else. EPFs capture the tradeoff in payoffs between $\mathsf{P}_1$ and $\mathsf{P}_2$, making them useful for solving SEFCEs. We now review the two-phase algorithm of \cite{bovsansky2017computation} using the example game in Figure~\ref{fig:basic_eg} with $k_1=k_2=0$. This approach forms the basis for our proposed FA method.

\paragraph{Phase 1: Computing EPF by Backward Induction.} The EPF at $s'$ is given by the line segment connecting payoffs of its children EPF and $-\infty$ everywhere else. This is because the leader is able to freely mix over actions.
To compute $U_{s}$, we consider in turn the EPFs after staying or exiting. Case 1: $\mathsf{P}_1$ is recommending $\mathsf{P}_2$ to stay. 
For incentive compatibility, it needs to \textit{promise} $\mathsf{P}_2$ a payoff of at least $0$ under $T(\text{stay};s)=s'$. 
Thus, we \textbf{left-truncate} the regions of the EPF at $s'$ which violate this promise, leaving behind the blue segment (Figure~\ref{fig:demo-2}), which represents the payoffs at $s'$ that are \textit{enforceable} by $\mathsf{P}_1$. Case 2: $\mathsf{P}_1$ is recommending $\mathsf{P}_2$ to exit. To discourage $\mathsf{P}_2$ from staying, it commits to the grim strategy at $s'$ if $\mathsf{P}_2$ chooses to stay instead, yielding $\mathsf{P}_2$ a payoff of $-1 \leq k_2=0$. Hence, no truncation is needed and the set of enforceable payoffs is the (degenerate) blue line segment (Figure~\ref{fig:demo-3}). Finally, to recover $U_{s}$, observe that we can achieve payoffs on any line segment connecting point across the EPFs of $s$'s children.
This union of points on such lines (ignoring those leader-dominated) is given by the \textbf{upper concave envelope} of the blue segments in Figure~\ref{fig:demo-2} and \ref{fig:demo-3}; this removes $\{(0, 0)\}$, giving the EPF in Figure~\ref{fig:demo-4}.

More generally, let $g_1$ and $g_2$ be functions such that $g_j : \mathbb{R} \mapsto \mathbb{R} \cup \{ -\infty \}$. We denote by $g_1 \bigwedge g_2$ their upper concave envelope, i.e., $\inf \{ h(\mu) \mid h \text{ is concave and } h \geq \max \{ g_1, g_2 \} \text{ over } \mathbb{R} \}$. Since $\bigwedge$ is associative and commutative, we use as shorthand $\bigwedge_{\{ \cdot \}}$ when applying $\bigwedge$ repeatedly over a finite set of functions. In addition, we denote $g \triangleright t $ as the left-truncation of the $g$ with threshold $t \in \mathbb{R}$, i.e., $[g \triangleright t](\mu) = g(\mu)$ if $\mu \geq t$ and $-\infty$ otherwise. Note that both $\bigwedge$ and $\triangleright$ are closed over concave functions. For any $s \in \mathcal{S}$, its EPF $U_s$ can be concisely written in terms of its children EPF $U_{s'}$ (where $s'\in\mathcal{C}(s)$) using $\bigwedge$, $\triangleright$ and $\tau(s')$. 
\begin{align}
    U_s (\mu) &= \begin{cases}
    \left[ \bigwedge_{s' \in \mathcal{C}(s)}U_{s'} \right] (\mu) & \text{ if } s \in \mathcal{S}_1 \\
    \left[ \bigwedge_{ s' \in \mathcal{C}(s)} U_{s'} \triangleright \tau(s') \right] (\mu) & \text{ if } s \in \mathcal{S}_2
    \end{cases},
    \label{eq:sefce_recurse}
\end{align}
which we apply in a bottom-up fashion to complete Phase 1.

\paragraph{Phase 2: Extracting Strategies from EPF.} Once $U_s$ has been computed for all $s \in \mathcal{S}$, we can recover the optimal strategy $\pi_1$ by applying one-step lookahead starting from the root. First, we extract $(\optval_2, \optval_1)$, the coordinates of the maximum point in $U_{\text{root}}$, which contain payoffs under the optimal $\pi$. Here, this is $(0, 4.5)$. We initialize $\mu_2=\optval_2$, which represents $\mathsf{P}_1$'s promised payoff to $\mathsf{P}_2$ at the current state $s$. Next, we traverse $\game$ depth-first. By construction, $U_s (\mu_2) > -\infty$ and the point $(\mu_2, U_s (\mu_2))$ is the convex combination of either $1$ or $2$ points belonging to its children EPFs. The mixing factors correspond to the optimal strategy $\pi(a; s)$. If there are $2$ distinct children $s', s''$ with mixing factor $\alpha', \alpha''$, we repeat this process for $s', s''$ with $\mu'_2 = \mu_2 / \alpha', \mu''_2 = \mu_2 / \alpha''$, otherwise we repeat the process for $s'$ and $\mu'_2 = \mu_2$. For our example, we start at $s$, $\mu_2=0$, which was obtained by $\mathsf{P}_2$ playing `stay' exclusively, so we keep $\mu_2$ and move to $s'$. At $s'$, $\mu=0$ by mixing uniformly, which gives us the result in Section~\ref{sec:background}.


\begin{theorem}[\cite{bovsansky2017computation,DBLP:journals/corr/BosanskyBHMS15}]\label{thm:plc}
(i) $U_s$ is piecewise linear concave with number of knots\footnote{Knots are where the slope of the EPF changes.} no greater than the number of leaves beneath $s$. (ii) Using backward induction, SEFCEs can be computed in polynomial time (in $|\mathcal{S}|$) even in games with chance. EPFs continue to be piecewise linear concave.
\end{theorem}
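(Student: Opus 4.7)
My plan is to prove both claims by structural induction on the game tree via the recurrence (1), combined with three closure properties of piecewise linear concave (PLC) functions under the operations in (1): the upper concave envelope $\bigwedge$, the left-truncation $\triangleright$, and (for part (ii)) an expectation-constrained supremal convolution at chance nodes.

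First I would establish three closure lemmas. Write $\kappa(g)$ for the number of knots of a PLC function $g$, where $g=-\infty$ is permitted outside the domain. (a) If $g_1,g_2$ are PLC, then $g_1 \bigwedge g_2$ is PLC with $\kappa(g_1 \bigwedge g_2) \leq \kappa(g_1)+\kappa(g_2)$; every breakpoint of the envelope coincides with a breakpoint of an input. (b) If $g$ is PLC, then $g \triangleright t$ is PLC with at most $\kappa(g)+1$ knots, since truncation discards a left tail and can introduce only the single new breakpoint $\mu=t$. (c) At a chance node with children $s'$ reached with probability $p_{s'}$, the induced $U_s(\mu)=\sup\{\sum_{s'}p_{s'}U_{s'}(\mu_{s'}) : \sum_{s'}p_{s'}\mu_{s'}=\mu\}$ is a linearly constrained supremal convolution; it preserves PLC with knot count at most $\sum_{s'}\kappa(U_{s'})$.

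For part (i), the base case is that each leaf has a single-point (degenerate) EPF, which is PLC with one knot, matching the one leaf beneath it. For the inductive step, fix an internal $s$ and assume the bound for each $s' \in \mathcal{C}(s)$. If $s \in \mathcal{S}_1$, lemma (a) applied via (1) yields $\kappa(U_s) \leq \sum_{s'}\kappa(U_{s'}) \leq \sum_{s'}(\text{leaves below } s') = (\text{leaves below } s)$. If $s \in \mathcal{S}_2$, combining (b) then (a) bounds $\kappa(U_s)$ by $\sum_{s'}(\kappa(U_{s'})+1)$; careful bookkeeping absorbs the extra truncation knots into the leaf count, using that each subtree contains at least one leaf and that truncations falling outside a child's domain contribute no new knot.

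For part (ii), I combine (i) with a per-node complexity analysis. After sorting the children's knots, the envelope, truncation, and chance-node supremal convolution can each be computed by a merge-style procedure in time $O(n\log n)$, where $n=\sum_{s'}\kappa(U_{s'})$. Since (i) bounds $n$ by the leaves beneath $s$ (hence by $|\mathcal{S}|$) and backward induction visits each of the $|\mathcal{S}|$ states once, the total runtime is polynomial in $|\mathcal{S}|$, and PLC is preserved throughout. The main anticipated obstacle is (c): verifying that the chance-node supremal convolution admits a closed-form PLC representation realizable by a greedy slope-merge, so that repeated application across many chance nodes does not inflate the knot count beyond the inductive budget.
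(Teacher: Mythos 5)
This theorem is imported from the cited prior work rather than proved in the paper itself, but your argument is exactly the standard one those works (and the paper's own Section~\ref{sec:review} together with the chance-node appendix) rely on: backward induction via Equation~\eqref{eq:sefce_recurse} combined with closure of piecewise linear concavity under $\bigwedge$, $\triangleright$, and the slope-merge maximal convolution $\bigoplus$ at chance nodes, so your proposal is correct and essentially identical in approach. The only cleanup needed is in your lemma (b): a nontrivial left-truncation deletes the old left-endpoint knot whenever it inserts the new one at $t$ (and deletes everything if $t$ exceeds the domain), so $\kappa(g \triangleright t) \leq \kappa(g)$ always holds and the ``careful bookkeeping'' you defer in the $\mathcal{S}_2$ case is unnecessary.
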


\paragraph{Markovian Property.} 
Just like state values $v_s$ in zero-sum games, we can replace any internal vertex $s$ in $\game$ with its EPF while not affecting the optimal strategy in all other branches of the game. This can done by adding a single leader vertex with actions leading to terminal states with payoffs corresponding to the knots of $U_s$. Since $U_s$ is obtained via backward induction, it only depends on states beneath $s$. In fact, if two games $\game$ and $\game'$ (which could be equal to $\game$) shared a common subgame rooted in $s$ and $s'$ respectively, we could reuse the $U_s$ found in $\game$ for $U_{s'}$ in $\game'$. This observation underpins the inspiration for our work---if $s$ and $s'$ are similar in some features, then $U_s$ and $U_{s'}$ are likely similar and it should be possible to \textit{learn} and \textit{generalize} EPFs over states. 

\section{Challenges in Applying FA to EPF}
\label{sec:challenges}
We now return to our original problem of applying FA to find SEFCE.
Our idea, outlined in Algorithm~\ref{alg:pipeline} and \ref{alg:bellman} is a straightforward extension of FVI. 
Suppose each state has features $f(s)$---in the simplest case this could be a state's history. We design a neural network $E_{\phi}(f)$ parameterized by $\phi$. This network maps state features $f(s)$ to some representation of $\tilde{U}_s$, the approximated EPFs. 
To achieve a good approximation, we optimize $\phi$ by minimizing an appropriate Bellman-like loss (over EPFs) based on Equation~\eqref{eq:sefce_recurse} while using our approximation $\tilde{U}_s$ in lieu of $U_s$. Despite its simplicity, there remain several design considerations.
\begin{figure}[t]
\centering
\begin{minipage}{0.45 \textwidth}
\begin{algorithm}[H]
\caption{Training Pipeline}
\label{alg:pipeline}
\begin{algorithmic}[1] 
\STATE {Sample trajectory $s_{\text{new}}^{(1)}, \dots, s_{\text{new}}^{(t)}$}
\STATE {Update replay buffer $\mathcal{B}$ with $s^{(1)}, \dots, s^{(t)}$}
\FOR {$i \in \{1,\dots,t \} $}
    \STATE{Sample batch $S=\{ s^{(1)}, \dots s^{(n)} \}  \subseteq \mathcal{B}$ }
    \STATE{$\ell \leftarrow \textsc{ComputeLoss}(S; E_{\phi})$}
    \STATE{Update $\phi$ using $\partial \ell/\partial\phi$}
\ENDFOR
\end{algorithmic}
\end{algorithm}
\end{minipage}
\begin{minipage}{0.45 \textwidth}
\begin{algorithm}[H]
\caption{\textsc{ComputeLoss}$(S; E_{\phi})$}
\label{alg:bellman}
\begin{algorithmic}[1] 
\FOR {$i \in \{ 1\dots n \} $}
\STATE { $\tilde{U}_{s^{(i)}} \leftarrow E_{\phi}(f(s^{(i)}))$}
\STATE {$\tilde{U}_{s_{\text{next}}^{(j)}} \leftarrow E_{\phi}(f(s_{\text{next}}^{(j)})) \quad \text{for all } s_{\text{next}}^{(j)} \in \mathcal{C}(s^{(i)})$}
\STATE {Compute $\tilde{U}^{\text{target}}_{s^{(i)}}$ using Equation~\eqref{eq:sefce_recurse} and $\{ \tilde{U}_{s_{\text{next}}^{(j)}} \}$} \label{line:lookahead}
\ENDFOR
\RETURN {$\sum_{i} L(\tilde{U}_{s^{(i)}}, \tilde{U}^{\text{target}}_{s^{(i)}})$} 
\end{algorithmic}
\end{algorithm}
\end{minipage}
\end{figure}

\paragraph{EPFs are the `right' object to learn.}
Unlike state values, representing an exact EPF at a state $s$ could require more than constant memory since the number of knots could be linear in the number of leaves underneath it (Theorem~\ref{thm:plc}). Can we get away with summarizing a state with a scalar or a small vector? Unfortunately, any `lossless summary' which enjoys the Markovian property necessarily encapsulates the EPF. To see why, consider the class of games $\game_{k}$ in Figure~\ref{fig:basic_eg} with $k_1=-2$ and $k=k_2 \in [-1, 1]$. 
The optimal leader payoff for any $\game_k$ is $\frac{9-11k}{2}$, which is precisely $U_{s'}(k)$ (Figure~\ref{fig:demo-2}).
Now consider any lossless summary for $s'$ and use it to solve \textit{every} $\game_{k}$. The resultant optimal leader payoffs can recover $U_{s'}(\mu_2)$ between $\mu_2 \in [-1,1]$. This implies that no lossless summary more compact than the EPF exists.


\paragraph{Unfulfillable Promises Arising from FA Error.}
Consider the game in Figure~\ref{fig:promise} with $k_1=-10$, $k_2=-1$. The exact $U_{s'}$ is the line segment combining the points $(-1, 10)$ and $(1-\epsilon, -1)$, shown in green in Figure~\ref{fig:unfilfillable}. However, let us suppose that due to function approximation we instead learned the blue line segment containing $(-1, 10)$ and $(1, -1)$. Performing Phase 2 using $\tilde{U}$, the policy extracted at $s'$ is once again the uniform policy and requires us to promise the follower a utility of $1$ in $s''$. However, achieving a payoff of $1$ is \textit{impossible} regardless of how much the leader is willing to sacrifice, since the maximum outcome under $s''$ is $1-\epsilon$. Since this is an \textit{unfulfillable promise}, the follower's best responds by exiting in $s$, which gives the leader a payoff of $-10$. In general, unfulfillable promises due to small FA error can lead to arbitrarily low payoffs. 
In fact, one could argue that $\tilde{U}$ does not even \textit{define} a valid policy.

\paragraph{Costly Promises.}
Consider the case where $k_1=-30, k_2=1$ while keeping $\tilde{U}_{s'}$ the same. Here, the promise of $1$ at $s''$ \textit{is} fulfillable, but involves incurring a cost of $-30$, which is even lower than having follower staying (Figure~\ref{fig:costly}). In general, this problem of \textit{costly promises} stems from the EPF being wrongly estimated, even for a small range of $\mu_2$. We can see how costly promises arise even from small $\epsilon$ is. The underlying issue is that in general, $U_s$ can have large Lipschitz constants (e.g., proportionate to $\left(\max_s r_1(s)-\min_s r_1(s) \right)/(\min|r_2(s)-r_2(s)|)$). The existence of costly payoffs rules out EPF representations based on discretizing the space of $\mu_2$, since small errors incurred by discretization could lead to huge drops in performance.

\begin{figure}[t]
\centering
\begin{subfigure}[b]{0.48\linewidth}
    \centering
    \begin{tikzpicture}[scale=1,font=\footnotesize]
\tikzstyle{leader}=[regular polygon,regular polygon sides=50,draw,inner sep=1.2,minimum size=12];
\tikzstyle{follower}=[regular polygon,regular polygon sides=50, rotate=180,draw,inner sep=1.2, minimum size=12, fill=gray!65];
\tikzstyle{terminal}=[draw,inner sep=1.2];
\tikzstyle{level 1}=[level distance=3mm,sibling distance=20mm]
\tikzstyle{level 2}=[level distance=3mm,sibling distance=12mm]
\tikzstyle{level 3}=[level distance=6.5mm,sibling distance=15mm]
\node(0)[follower, label=center:{$s$}]{}
    child{node(1)[leader, label=center:{$s'$}]{}
        child{node(3)[terminal, label=below:{$(10,-1)$}]{}
        }
        child{node(4)[leader, label=center:{$s''$}]{}
            child{node(10)[terminal, label=below:{$(k_1,k_2)$}]{}}
            child{node(11)[terminal, label=below:{$(-1,1-\epsilon)$}]{}}
        }
        edge from parent node[above left]{}
    }
    child{node(2)[terminal, label=below:{$(-10,0)$}]{}
    edge from parent node[above right]{}
    };
\end{tikzpicture}
    \subcaption{Game tree to illustrate unfulfillable and costly promises.}
    \label{fig:promise}
    \end{subfigure}
\begin{subfigure}[b]{0.48\linewidth}
    \centering
    \begin{tikzpicture}[scale=1,font=\footnotesize]
\tikzstyle{leader}=[regular polygon,regular polygon sides=50,draw,inner sep=1.2,minimum size=12];
\tikzstyle{follower}=[regular polygon,regular polygon sides=50, rotate=180,draw,inner sep=1.2, minimum size=12, fill=gray!65];
\tikzstyle{terminal}=[draw,inner sep=1.2];
\tikzstyle{level 1}=[level distance=7mm,sibling distance=20mm]
\tikzstyle{level 2}=[level distance=3mm,sibling distance=12mm]

\node(0)[follower, label=center:{$s$}]{}
    child{node(1)[leader, label=center:{$s'$}]{}
        child{node(3)[terminal, label=below:{$(0,0)$}]{}
        }
        child{node(4)[terminal, label=below:{$(-1,-1)$}]{}
        }
        edge from parent node[above left]{refuse}
    }
    child{node(2)[terminal, label=below:{$(q_1,-q_2)$}]{}
    edge from parent node[above right]{accede}
    };
\end{tikzpicture}
    \subcaption{Stage game used in the \textsc{Tantrum} game.}
    \label{fig:threat_game}
\end{subfigure}
\caption{Games used in Sections~\ref{sec:challenges} and \ref{sec:experiments}. Leader \tikz\draw[black] (0,0) circle (.7ex);, follower  \tikz\draw[black,fill=gray] (0,0) circle (.7ex); and leaf $\square$ states are vertices and edges are actions.}
\end{figure}

\begin{figure}[t]
\centering
\begin{subfigure}{.48\linewidth}
    \centering
    \includegraphics[width=\textwidth]{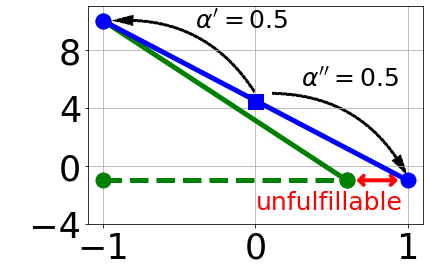}
    \subcaption{Unfulfillable promise.}
    \label{fig:unfilfillable}
\end{subfigure}
\begin{subfigure}{.48\linewidth}
    \centering
    \includegraphics[width=\textwidth]{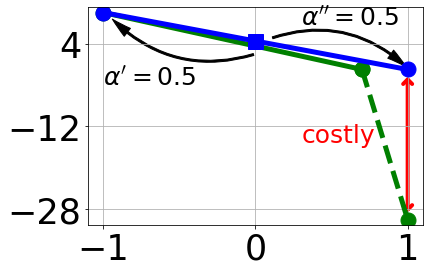}
    \subcaption{Costly promise.}
    \label{fig:costly}
\end{subfigure}. \caption{EPFs for (a) unfulfillable promises and (b) costly promises. Blue lines are estimated EPFs $\tilde{U}_{s'}$, solid and dotted green lines are true EPFs $U_{s'}$, $U_{s''}$. In both cases, FA error leads us to believe that the payoff given by the blue square at $(0, 4.5)$ can be achieved by mixing the endpoints of $\tilde{U}_{s'}$ with probability $\alpha'=\alpha''=0.5$ (black curves).}
\end{figure}

\section{FA of EPF with Performance Guarantees}
\label{sec:our_method}
We now design our method using the insights from Section~\ref{sec:challenges}. We learn EPFs without relying on discretization over $\mathsf{P}_2$ payoffs $\mu_2$. Unfulfillable promises are avoided entirely by ensuring that the set of $\mu_2$ where $\tilde{U}_s(\mu_2) > -\infty$ lies within some known set of achievable $\mathsf{P}_2$ payoffs, while costly promises are mitigated by suitable loss functions.
\paragraph{Representing EPFs using Neural Networks.}
Our proposed network architecture represents EPFs by a small set of $m \geq 2$ points $P_\phi(s) = \{(x_j, y_j)\}$, for $j \in [m]$. Here, $m$ is a hyperparameter trading off complexity of the neural network $E_\phi$ with its representation power. The approximated EPF $\tilde{U}_s$ is the linear interpolation of these $m$ points; and $\tilde{U}_s = -\infty$ if $\mu_2 > \max_j x_j$ or $\mu_2 < \min_j x_j$. For now, we make the assumption that follower payoffs under the altruistic and grim strategy ($\UBV(s)$ and $\LBV(s)$) are known \textit{exactly} for all states. Through the architecture of $E_\phi$ that for all $j \in [m]$, we have $\LBV(s) \leq x_j \leq \UBV(s)$. As we will see, this helps avoid unfulfillable promises and allows for convenient loss functions. 

Concretely, our network $E_\phi(f(s); \LBV(s), \UBV(s))$ takes in as inputs state features $f(s)$, lower and upper bounds $\LBV(s) \leq \UBV(s)$ and outputs a matrix in $\mathbb{R}^{m \times 2}$ representing $\{(x_j, y_j)\}$ where $x_1 = \LBV(s)$ and $x_m = \UBV(s)$. For simplicity, we use a multilayer feedforward network with depth $d$, width $w$ and ReLU activations for each layer. Serious applications should utilize domain specific architectures. Denoting the output of the last fully connected layer by $h^{(d)}(f(s)) \in \mathbb{R}^w$, for $j \in \{ 2 \dots m-1 \}$ and $k \in [m]$ we set
\begin{align*}
x_j &= \sigma \left( z_{x, j}^T h^{(d)}(f(s)) + b_{x, j} \right) \cdot \left({\UBV(s)-\LBV(s) }\right) + \LBV(s),
\\ 
y_k & = z_{y, k}^T h^{(d)}(f(s)) + b_{y, k},
\end{align*}
and $x_1 = \LBV(s)$ and $x_m = \UBV(s)$, where $\sigma(x)=1/(1+\exp(-x))$. Here, $z_{x,j}, z_{y,k} \in \mathbb{R}^w$ and $b_{x,j}, b_{y,k} \in \mathbb{R}$ are weights and biases, which alongside the parameters from feedforward network form the network parameters $\phi$ to be optimized. Since $\tilde{U}_s$ is represented by its knots (given by $P_\phi(s)$), $\bigwedge$ and consequently, \eqref{eq:sefce_recurse} may be performed \textit{explicitly} and \textit{efficiently}, returning an entire EPF represented by its knots (as opposed to the EPF evaluated at a single point). This is crucial, since the computation is performed every state every iteration (Line~\ref{line:lookahead} of Algorithm~\ref{alg:bellman}).

\paragraph{Loss Functions for Learning EPFs.} 
Given $2$ EPFs $\tilde{U}_s, \tilde{U}'_s$ 
we minimize the following loss to mitigate costly promises,
\begin{align*}
L_{\infty}(\tilde{U}_s, \tilde{U}'_s) = \max_{\mu_2} | \tilde{U}_s(\mu_2) - \tilde{U}'_s(\mu_2)|.
\end{align*}
$L_\infty$ was chosen specifically to incur a large loss if the approximation is wildly inaccurate in a small range of $\mu_2$ (e.g., Figure~\ref{fig:costly}). Achieving a small loss requires that $\tilde{U}_s(\mu_2)$ approximates $\tilde{U}'_s(\mu_2))$ well for all $\mu_2$. This design decision is particularly important. For example, contrast $L_\infty$ with another intuitive loss $L_2 (\tilde{U}_s, \tilde{U}'_s) = \int_{\mu_2} ( \tilde{U}_s(\mu_2) - \tilde{U}'_s(\mu_2))^2 d\mu_2$. Observe that $L_2$ is exceedingly small in the example of Figure~\ref{fig:costly} --- in fact, when $\epsilon$ is small enough leads to almost no loss, even though the policy as discussed in Section~\ref{sec:challenges} is highly suboptimal. This phenomena leads to costly promises, which was indeed observed in our tests. 
\paragraph{Our Guarantees. }
Any learned $\tilde{U}$ implicitly defines a policy $\tilde{\pi}$ by one-step lookahead using Equation~\eqref{eq:sefce_recurse} and the method described in Phase 2 (Section~\ref{sec:review}). Extracting $\tilde{\pi}$ need not be done offline for all $s \in \mathcal{S}$; in fact, when $\game$ is too large it is necessary that we only extract $\tilde{\pi}(\cdot; s)$ on-demand. Nonetheless, $\tilde{\pi}$ enjoys some important properties.

\begin{theorem}[Incentive Compatibility]
For any policy $\tilde{\pi}$ obtained using our method, any $s \in \mathcal{S}_2$ and $a \in \mathcal{A}(s)$, we have $\tilde{\pi}_s(a;s) > 0 \Longrightarrow R_2(T(a; s); \tilde{\pi}) \geq \tau(T(a; s))$.
\label{thm:incentive}
\end{theorem}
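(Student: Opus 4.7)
The plan is to induct on the depth of the subtree rooted at $s$ and track the promised payoff $\mu_2(s)$ maintained during Phase 2 of policy extraction. Concretely, I would prove the following stronger inductive invariant: for every state $s$ visited during extraction, the recursion assigns a promise $\mu_2(s)$ with $\tilde{U}_s^{\text{target}}(\mu_2(s)) > -\infty$, and $R_2(s; \tilde{\pi}) = \mu_2(s)$. This invariant immediately implies Theorem~\ref{thm:incentive}: at a follower state $s$ with $\tilde{\pi}_2(a; s) > 0$ and $s' = T(a;s)$, the mixing in Phase 2 writes $(\mu_2(s), \tilde{U}_s^{\text{target}}(\mu_2(s)))$ as a convex combination of points lying on $\tilde{U}_{s'} \triangleright \tau(s')$; any child assigned positive probability must therefore correspond to a promise $\mu_2(s') \geq \tau(s')$, otherwise that point would be $-\infty$ by definition of $\triangleright$. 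Combined with the invariant $R_2(s'; \tilde{\pi}) = \mu_2(s')$, this yields $R_2(T(a;s); \tilde{\pi}) \geq \tau(T(a;s))$.

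For the induction itself, the base case consists of leaves, where the only feasible promise is $\mu_2(\ell) = r_2(\ell)$ and trivially $R_2(\ell; \tilde{\pi}) = r_2(\ell)$. For the inductive step, I would separately treat $s \in \mathcal{S}_1$ and $s \in \mathcal{S}_2$. In both cases, Phase 2 expresses $(\mu_2(s), \tilde{U}_s^{\text{target}}(\mu_2(s)))$ as a convex combination of at most two points on the (possibly truncated) children EPFs, say with weights $\alpha'$ on $s'$ and $\alpha''$ on $s''$ and corresponding promises $\mu_2(s'), \mu_2(s'')$; these weights are precisely the probabilities assigned to the actions by $\tilde{\pi}_1$ or $\tilde{\pi}_2$ respectively. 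Linearity of the expected follower payoff gives $R_2(s; \tilde{\pi}) = \alpha' R_2(s'; \tilde{\pi}) + \alpha'' R_2(s''; \tilde{\pi})$, and applying the inductive hypothesis to each child reduces this to $\alpha' \mu_2(s') + \alpha'' \mu_2(s'') = \mu_2(s)$, where the last equality is exactly the $x$-coordinate constraint of the convex combination.

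The architectural guarantee that $\tilde{U}_{s'}$ has support in $[\LBV(s'), \UBV(s')]$, together with closure of $\bigwedge$ and $\triangleright$ under concave functions, ensures that the promises $\mu_2(s')$ produced by the recursion are always in the domain of $\tilde{U}_{s'}$, so the induction never breaks. The main obstacle, and the place I would be most careful, is articulating the decomposition step cleanly: verifying that for any point in the support of an upper concave envelope $\bigwedge_j g_j$, there exist one or two indices $j$ and convex weights such that the point is a combination of points where $g_j > -\infty$. This is a routine property of upper concave envelopes of piecewise linear concave functions and is precisely what Phase 2 exploits; once stated, both invariants follow directly and the incentive compatibility conclusion is immediate from the $\triangleright \tau(s')$ term appearing in the follower branch of Equation~\eqref{eq:sefce_recurse}.
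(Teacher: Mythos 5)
Your proposal is correct and follows essentially the same route as the paper: the paper's argument rests on (a) the architectural/domain guarantee that $\dom[\tilde{U}_s] = [\LBV(s), \UBV(s)]$ (Lemma~\ref{lem:dom_predicted}), so every promise issued during Phase~2 is achievable, and (b) the $\triangleright\,\tau(s')$ truncation in the backup, which forces any child receiving positive probability to carry a promise of at least $\tau(s')$. Your explicit inductive invariant $R_2(s;\tilde{\pi}) = \mu_2(s)$ is exactly the formalization the paper leaves implicit (it is the same bookkeeping underlying the recursion for $\tilde{Q}_s$ in the appendix), so there is nothing to flag.
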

\begin{theorem}[FA Error]
If $L_{\infty}(\tilde{U}_s, \tilde{U}^{\text{target}}_s) \leq \epsilon$ for all $s \in \mathcal{S}$, then $|R_1(\tilde{\pi}) - R_1(\pi^*)| = \mathcal{O}(D \epsilon)$ where $D$ is the depth of $\game$ and $\pi^*$ is the optimal strategy.
\label{thm:FA_guarantee}
\end{theorem}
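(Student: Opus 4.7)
The plan is to show that the per-state error $\epsilon$ propagates at most linearly with depth through the Bellman-like recursion in Eq.~\eqref{eq:sefce_recurse}, and then to convert this EPF error into a bound on the executed leader payoff. The first step is to observe that both operators comprising the backup are $1$-Lipschitz in $L_\infty$: if $\|g_i - g'_i\|_\infty \le \epsilon$ for $i=1,2$, then $\|g_1 \bigwedge g_2 - g'_1 \bigwedge g'_2\|_\infty \le \epsilon$, and $\|g \triangleright t - g' \triangleright t\|_\infty \le \epsilon$ for any fixed threshold $t$. The first follows from the characterization of $\bigwedge$ as the infimum over concave majorants (shifting both inputs by $\epsilon$ shifts the envelope by $\epsilon$); the second is immediate from the pointwise definition. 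Consequently, the entire backup in Eq.~\eqref{eq:sefce_recurse}, whose thresholds $\tau(s')$ are data-determined and independent of the approximation, is $1$-Lipschitz in $L_\infty$ with respect to the children EPFs.

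Next I would induct from the leaves upward to bound $\|\tilde{U}_s - U^*_s\|_\infty$, where $U^*_s$ denotes the exact EPF. Leaves agree exactly. For internal $s$, letting $\tilde{U}^{\text{target}}_s$ denote the backup of $\{\tilde{U}_{s'}\}_{s' \in \mathcal{C}(s)}$, the triangle inequality and the Lipschitz property above give
\begin{align*}
\|\tilde{U}_s - U^*_s\|_\infty
&\le \|\tilde{U}_s - \tilde{U}^{\text{target}}_s\|_\infty + \|\tilde{U}^{\text{target}}_s - U^*_s\|_\infty \\
&\le \epsilon + \max_{s' \in \mathcal{C}(s)} \|\tilde{U}_{s'} - U^*_{s'}\|_\infty.
\end{align*}
Unrolling yields $\|\tilde{U}_s - U^*_s\|_\infty \le d_s \epsilon$, where $d_s$ is the depth of the subtree rooted at $s$. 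In particular $|\max_\mu \tilde{U}_{\text{root}}(\mu) - R_1(\pi^*)| \le D \epsilon$, since $R_1(\pi^*) = \max_\mu U^*_{\text{root}}(\mu)$ and the maximum is $1$-Lipschitz in $L_\infty$.

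The final step is to bridge $\max_\mu \tilde{U}_{\text{root}}(\mu)$ and the actual value $R_1(\tilde{\pi})$ delivered when $\tilde{\pi}$ is executed. Incentive compatibility (Theorem~\ref{thm:incentive}) guarantees that $\mathsf{P}_2$ follows the recommendations, so $R_1(\tilde{\pi})$ is the expected leader payoff along the recommended play. Letting $V(s, \mu_2)$ denote this value when $\tilde{\pi}$ is executed from $s$ with the propagated promise $\mu_2$, I would show by a second bottom-up induction that $|V(s, \mu_2) - \tilde{U}^{\text{target}}_s(\mu_2)| \le d_s \epsilon$: at leader states, Phase~2 picks mixing factors and per-child promises $\mu'_2$ so that $\tilde{U}^{\text{target}}_s(\mu_2)$ is a convex combination of $\tilde{U}_{s'}(\mu'_2)$, each within $\epsilon$ of the recursively analyzed $\tilde{U}^{\text{target}}_{s'}(\mu'_2)$; at follower states, the truncation $\triangleright \tau(s')$ already forbids propagating an infeasible $\mu'_2$. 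Combining the two halves yields $R_1(\tilde{\pi}) \ge R_1(\pi^*) - \mathcal{O}(D\epsilon)$, while $R_1(\tilde{\pi}) \le R_1(\pi^*)$ follows since $\tilde{\pi}$ is incentive compatible and $\pi^*$ is the optimal SEFCE.

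The principal obstacle is handling the boundary of $\dom(U_s)$: the $L_\infty$ loss is only meaningful where both functions are finite, so the argument relies on the architectural constraint that $\tilde{U}_s$ is supported exactly on $[\LBV(s), \UBV(s)]$, which contains $\dom(U^*_s)$ and precludes unfulfillable promises during Phase~2 extraction. A related subtlety is that the propagated promises $\mu'_2$ must stay within $\dom(\tilde{U}_{s'})$; this holds because Eq.~\eqref{eq:sefce_recurse} applied to $\{\tilde{U}_{s'}\}$ produces only pairs realizable as convex combinations of children knots, whose $x$-coordinates lie in the admissible intervals by construction of $E_\phi$.
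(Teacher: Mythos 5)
Your proposal is correct and follows essentially the same two-stage route as the paper: a bottom-up induction showing $\|\tilde{U}_s - U_s\|_\infty \le D(s)\,\epsilon$, followed by a second induction comparing the executed value of $\tilde{\pi}$ to the predicted EPF, combined at the root via the argmax. Your packaging of the backup's $1$-Lipschitzness of $\bigwedge$ and $\triangleright$ as a standalone lemma is a slightly cleaner statement of what the paper proves through its two-case argmax argument, and your explicit handling of the domain-matching issue mirrors the paper's Lemma~\ref{lem:dom_predicted}.
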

Here, $T(a; s)$ is transition function (Section~\ref{sec:background}). Recall from Section~\ref{sec:background} that for $\pi$ to be an optimal SEFCE, we require (i) incentive compatibility and (ii) $R_1(\pi)$ to be maximized. 
Theorems~\ref{thm:incentive} and \ref{thm:FA_guarantee} illustrate how our approach disentangles these criteria. Theorem~\ref{thm:incentive} guarantees that $\mathsf{P}_2$ will always be incentivized to follow $\mathsf{P}_1$'s recommendations, i.e., there will be no unexpected outcomes arising from unfulfillable promises. Crucially, this is a hard constraint which is satisfied solely due to our choice of network architecture, which ensures that $\tilde{U}_s(\mu_2)=-\infty$ when $\mu_2 > \UBV_s$ for \textit{any} $\tilde{\pi}$ obtained from $\tilde{U}$. Conversely, Theorem~\ref{thm:FA_guarantee} shows that the goal of maximizing $R_1$ \textit{subject to incentive compatibility} is achieved by attaining a small FA error across all states.
This distinction is important. Most notably, incentive compatibility is no longer dependent on convergence during training. This \textit{explicit} guarantee stands in contrast with methods employing self-play reinforcement learning agents; there, incentive compatibility follows \textit{implicitly} from the apparent convergence of a player's strategy. This guarantee has practical implications, for example, evaluating the quality of $\tilde{\pi}$ can be done by estimating $R_1(\tilde{\pi})$ based on sampled trajectories, while implicit guarantees requires incentive compatibility to be demonstrated using some approximate best-response oracle and usually involves expensive training of a RL agent.

The primary limitation of our method is when $\UBV$ and $\LBV$ (and hence $\tau$) are not known exactly. As it turns out, we can instead use upper and lower approximations while still retaining incentive compatibility. Let $\tilde{\pi}_1^{\text{grim}}$ be an \textit{approximate grim strategy}. 
Define $\ALBV(s)$ to be the expected follower payoffs at $s$ when faced best-responding to $\tilde{\pi}_1^{\text{grim}}$, i.e., $R_2(s; \tilde{\pi}_1^{\text{grim}}, \pi_2)$, where $\pi_2 \in BRS_2(\tilde{\pi}_1^{\text{grim}})$. 
Following Definition~\ref{def:approx_min_req_incentive}, the \textit{approximate minimum required incentive} is $\tilde{\tau}(s')=\max_{s^! \in \mathcal{C}(s);s^! \neq s'} \ALBV(s^!)$ for all $s \in \mathcal{S}_2$, $s'\in\mathcal{C}(s)$.
Similarly, let $\tilde{\pi}^{\text{alt}}$ be an \textit{approximate joint altruistic strategy} and its resultant internal payoffs in each state be $\AUBV(s)$. 

Under the mild assumption that $\tilde{\pi}^{\text{alt}}$ always benefits $\mathsf{P}_2$ more than the $\tilde{\pi}_1^{\text{grim}}$, i.e., $\AUBV(s) \geq \ALBV(s)$ for all $s$, we can replace the $\tau, \LBV$ and $\UBV$ with $\tilde{\tau}, \ALBV$ and $\AUBV$ and maintain incentive compatibility (Theorem~\ref{thm:incentive}). The intuition is straightforward: if $\mathsf{P}_2$'s threats are `good enough', parts of the EPF will still be enforceable. Furthermore, promises will always be fulfillable since  EPFs domains are now limited to be no greater than $\AUBV(s)$, which we know can be achieved by definition. Unfortunately, Theorem~\ref{thm:FA_guarantee} no longer holds, not even in terms of $\max_{s}|\tilde{\tau}(s)-\tau(s)|$. This is again due to the large Lipschitz constants of $U_s$. However, we have the weaker guarantee (whose proof follows that of Theorem~\ref{thm:FA_guarantee}) that performance is close to that predicted at the root.
\begin{theorem}[FA Error with Weaker Bounds]
If $L_{\infty}(\tilde{U}_s, \tilde{U}^{\text{target}}_s) \leq \epsilon$ for all $s \in \mathcal{S}$, then $|R_1(\tilde{\pi}) - \widetilde{\mathsf{OPT}}_2| = \mathcal{O}(D \epsilon)$ where $D$ is the depth of $\game$ and $\widetilde{\mathsf{OPT}}_2 = \max_{\mu_2} \tilde{U}_{\text{root}}(\mu_2)$.
\label{thm:FA_guarantee_approx}
\end{theorem}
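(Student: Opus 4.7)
The plan is to prove this by induction on the depth of the subtree, establishing the stronger pointwise claim that for every state $s$ and every promised follower payoff $\mu_2$ reachable by the extraction procedure from that state, $|R_1(\tilde{\pi}, s, \mu_2) - \tilde{U}_s(\mu_2)| \leq d(s) \cdot \epsilon$, where $d(s)$ is the depth of the subtree rooted at $s$ and $R_1(\tilde{\pi}, s, \mu_2)$ denotes the expected leader payoff when $\tilde{\pi}$ is rolled out from $s$ with promise $\mu_2$. Applying this at the root with $\mu_2^\star = \argmax_{\mu_2} \tilde{U}_{\text{root}}(\mu_2)$ immediately yields $|R_1(\tilde{\pi}) - \widetilde{\mathsf{OPT}}_2| \leq D \epsilon = \mathcal{O}(D\epsilon)$.

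The inductive step hinges on the structure of Phase 2 extraction applied to $\tilde{U}^{\text{target}}_s$ (computed from the children $\tilde{U}_{s'}$ via Equation~\eqref{eq:sefce_recurse}). By construction of the upper concave envelope and (when relevant) the truncation operator, the point $(\mu_2, \tilde{U}^{\text{target}}_s(\mu_2))$ is a convex combination of at most two points $(\mu_2^{(s')}, \tilde{U}_{s'}(\mu_2^{(s')}))$ lying strictly inside the domains of the corresponding children's EPFs, with mixture weights $\alpha_{s'}$ that exactly define $\tilde{\pi}(\cdot;s)$ and the recursive promises $\mu_2^{(s')}$ passed downward. Linearity of expectation under $\tilde{\pi}$ then gives $R_1(\tilde{\pi}, s, \mu_2) = \sum_{s'} \alpha_{s'} R_1(\tilde{\pi}, s', \mu_2^{(s')})$ while the envelope identity gives $\tilde{U}^{\text{target}}_s(\mu_2) = \sum_{s'} \alpha_{s'} \tilde{U}_{s'}(\mu_2^{(s')})$. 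Combining via the triangle inequality with the assumed loss bound $|\tilde{U}_s(\mu_2) - \tilde{U}^{\text{target}}_s(\mu_2)| \leq \epsilon$ yields
\begin{align*}
|R_1(\tilde{\pi}, s, \mu_2) - \tilde{U}_s(\mu_2)| \leq \epsilon + \max_{s' \in \mathcal{C}(s)} |R_1(\tilde{\pi}, s', \mu_2^{(s')}) - \tilde{U}_{s'}(\mu_2^{(s')})|,
\end{align*}
and the induction hypothesis closes the bound at $d(s) \cdot \epsilon$.

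The base case is handled at leaves $\ell$, where the architectural constraint $x_1 = \ALBV(\ell) = \AUBV(\ell) = r_2(\ell)$ collapses $\tilde{U}_\ell$ to a single point whose $y$-value is pinned to $r_1(\ell)$ (or differs by at most $\epsilon$ via the loss bound against the degenerate target), and the only $\mu_2$ that can be handed down is $r_2(\ell)$ itself. The main obstacle is verifying that feasibility is preserved along the recursion, i.e., that each passed-down $\mu_2^{(s')}$ actually lies in the (non-$-\infty$) domain of $\tilde{U}_{s'}$ --- a point which is not automatic from the loss assumption alone but follows because the extraction convex combination by construction selects $\mu_2^{(s')}$ from among the explicit knots of $\tilde{U}_{s'}$. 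This is the same architectural property that underlies incentive compatibility in Theorem~\ref{thm:incentive}, and linking the two in the write-up is where care is needed; everything else is a straightforward unrolling of the triangle inequality along the depth-$D$ trajectory, with Theorem~\ref{thm:FA_guarantee} following the identical induction except that its base case replaces $\widetilde{\mathsf{OPT}}_2$ with $R_1(\pi^\star)$ by additionally bounding $|\tilde{U}_{\text{root}}(\mu_2^\star) - U_{\text{root}}(\mu_2^\star)|$ through exact $\tau, \LBV, \UBV$.
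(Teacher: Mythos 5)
Your proposal is correct and follows essentially the same route as the paper: it defines the rolled-out leader value (the paper's $\tilde{Q}_s(\mu_2)$), proves by induction on subtree depth that it stays within $d(s)\,\epsilon$ of $\tilde{U}_s(\mu_2)$ via the convex-combination decomposition of the one-step lookahead plus the triangle inequality, and evaluates at the root's maximizer. The feasibility/domain issue you flag is exactly what the paper dispatches with its domain lemma (noting that truncation and the argmax never select points outside the children's domains), so your write-up would need only that supporting lemma to be complete.
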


\textbf{Remark.} The key technical difficulty here is finding $\AUBV$. In our experiments, $\tilde{\pi}_1^{\text{grim}}$ can be found analytically. In general large games, we can approximate $\tilde{\pi}_1^{\text{grim}}$, $\AUBV$ by searching over $\mathcal{S}_2$, 
but use heuristics when expanding nodes in $\mathcal{S}_1$.

\paragraph{Implementation Details.} 
(i) We use several techniques typically used to stabilize training such as target networks \cite{arulkumaran2017deep,mnih2015human} and prioritized experience replay \cite{schaul2015prioritized}. 
(ii) In practice, instead of $L_\infty$, we found it easier to train a loss based on the sum of the squared distances at the $x$-coordinate of the knots in $\tilde{U}_s$ and $\tilde{U}_s'$, i.e., $L=\sum_{\mu_2 \in \{\text{knots}\}} [ \tilde{U}_s(\mu_2) - \tilde{U}'_{s}(\mu_2)]^2$. Since $L$ upper bounds $L_\infty^2$, using it also avoids costly promises and allows us to enjoys a similar FA guarantee. (iii) If $\game$ has a branching factor of $\beta$, then \eqref{eq:sefce_recurse} in Algorithm~\ref{alg:bellman} can be executed in $\mathcal{O}(\beta m)$ time. In practice, we use a brute force method better suited for batch GPU operations which runs in $\mathcal{O}((\beta m)^3)$.
(iv) We train using only the decreasing portions of $\tilde{U}_s$. This does not lead to any loss in performance since payoffs in the increasing portion of an $\tilde{U}_s$ are Pareto dominated. 
We do not want to `waste' knots on learning the meaningless increasing portion.
(v) Training trajectories were obtained by taking actions uniformly at random. 
Specifics for all implementation details are in the Appendix.

\section{Experiments}
\label{sec:experiments}
\begin{figure*}[t]
\begin{subfigure}[b]{0.3\linewidth}
\setlength{\tabcolsep}{3.5pt}
\begin{tabular}[c]{ccccc}
        & \#& $\overline{\Delta_{\text{OPT}}}$&$\overline{\Delta_{\text{SP}}}$ & $\overline{\Delta_{\text{non}}}$ \\
        \hline \hline
        \textsc{RC} & 10 & -.0247 & .200& .265\\
        \hline
        \textsc{Tantrum} & 5 & -.0262 & 8.89 & N/A\\
        \hline
        \textsc{RC}+ & 3 & N/A  & N/A & .421
\end{tabular}
\subcaption{Results for fixed parameter games}
\label{fig:fix_param_results}
\end{subfigure}
\begin{subfigure}[b]{0.22 \linewidth}
    \includegraphics[width=\linewidth]{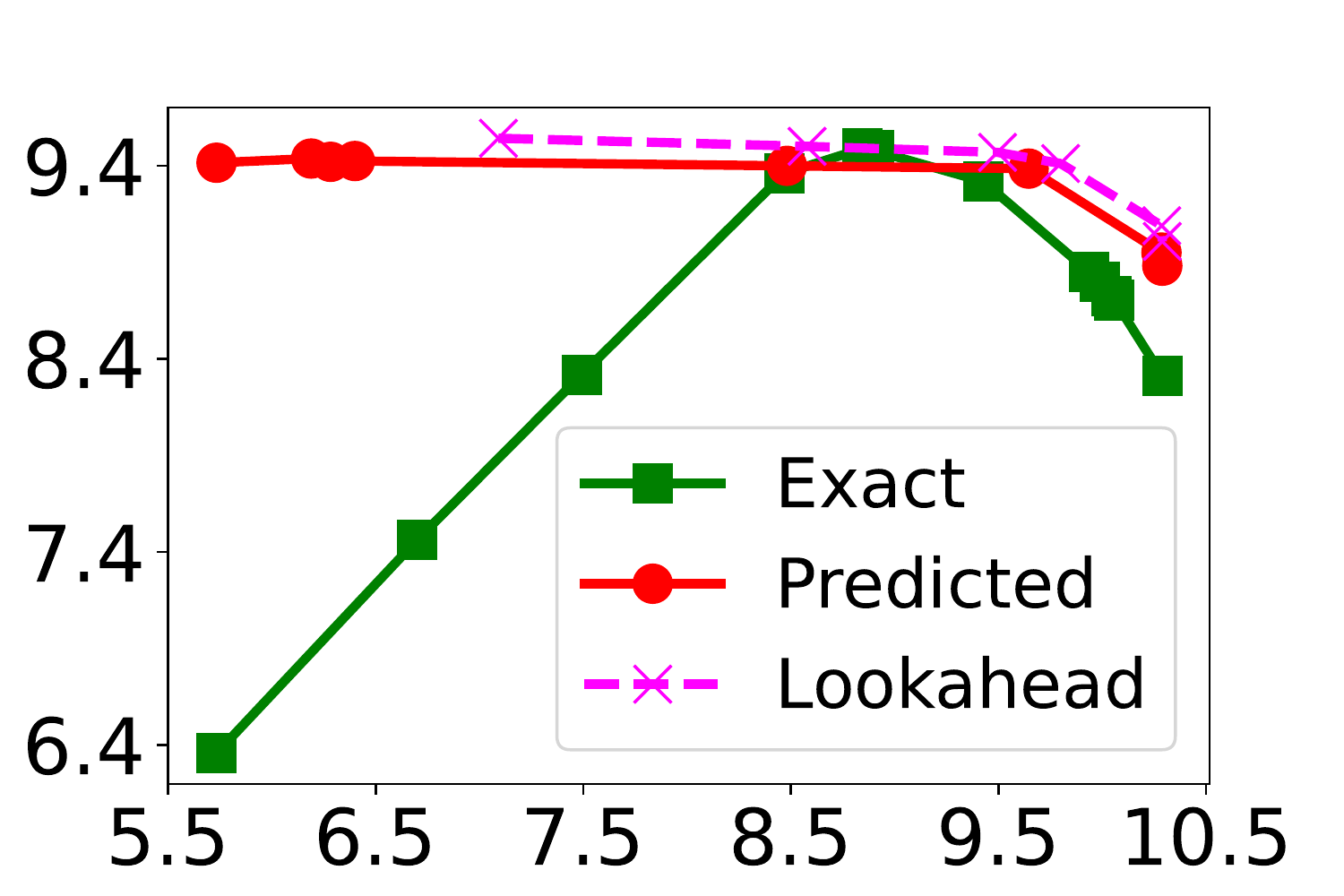}
    \subcaption{EPF after 100k epochs}
    \label{fig:epoch-comparison-100k}
    \end{subfigure}
    \begin{subfigure}[b]{0.22 \linewidth}
    \includegraphics[width=\linewidth]{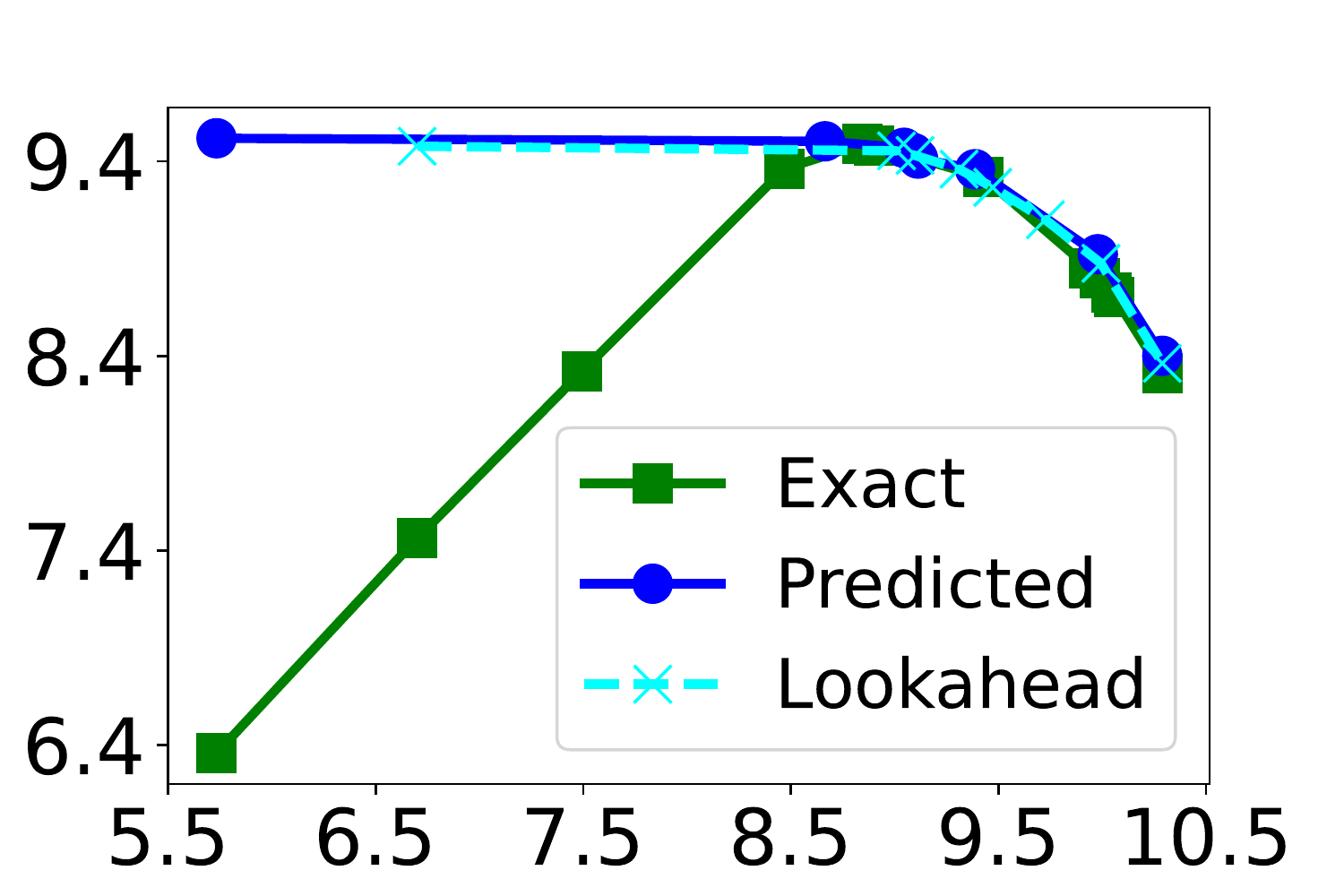}
    \subcaption{EPF after 2M epochs}
    \label{fig:epoch-comparison-2M}
    \end{subfigure}
    \begin{subfigure}[b]{0.22 \linewidth}
    \includegraphics[width=\linewidth]{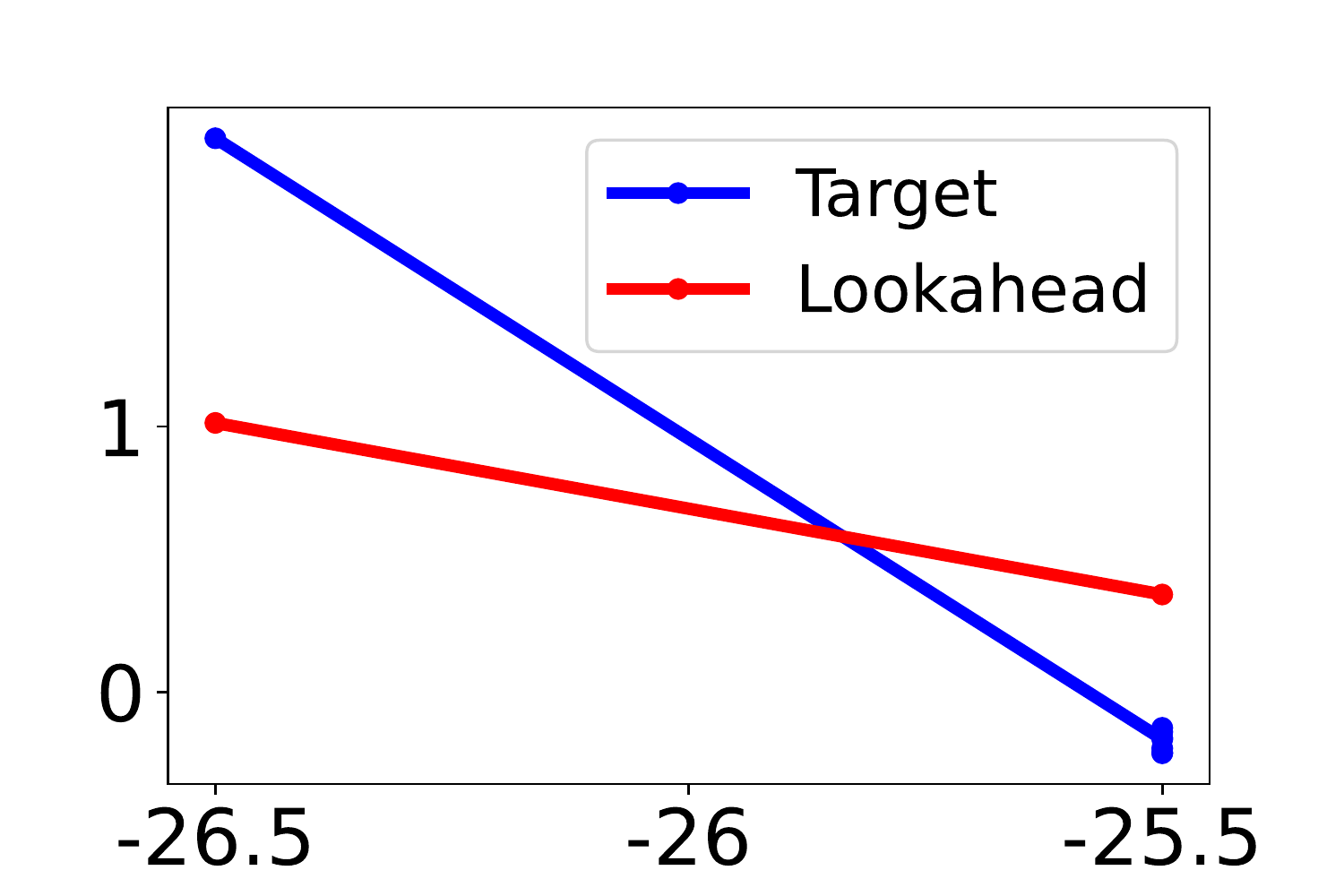}
    \subcaption{Failure case}
    \label{fig:threat-large-fa}
    \end{subfigure}
\caption{(a) Results for games with fixed parameters averaged over \# specifies \# trials. $\overline{\Delta_{\text{OPT}}}$, $\overline{\Delta_{\text{SP}}}$, and $\overline{\Delta_{\text{non}}}$ is the average difference between our method and the optimal SEFCE, subgame perfect Nash, and non-strategic leader commitment.
(b)-(c) Learned EPFs at the root for $\textsc{RC}$. (d) A failure case in \textsc{Tantrum}, even though learned policies are still near-optimal.
}
\label{a}
\end{figure*}
We focused on the following two synthetic games. Game details and experiment environments are in the Appendix. Code is at https://github.com/lingchunkai/learn-epf-sefce.

\paragraph{Tantrum.} \textsc{Tantrum} is the game in Figure~\ref{fig:threat_game} repeated $n$ times, with $q_1 > 0, q_2 \geq 1$, and rewards accumulated over stages. The only way $\mathsf{P}_1$ can get positive payoffs is by threatening to throw a trantrum with the mutually destructive $(-1, -1)$ outcome. Since $q_2 > 1$, $\mathsf{P}_2$ has to use threats spanning over stages to sufficiently entice $\mathsf{P}_2$ to accede. Even though \textsc{Trantrum} has $\mathcal{O}(3^n)$ leaves, it is clear that the grim (resp. altruistic) strategy is to throw (resp. not throw) a tantrum at every step. Hence $\UBV$ and $\LBV$ are known even when $n$ is large, making \textsc{Tantrum} a good testbed. The raw features $f(s)$ is a 5-dimensional vector, the first 3 are the occurrences count of outcomes for previous stages, and the last 2 being a one-hot vector indicating the current state.

\paragraph{Resource Collection.} \textsc{RC} is played on a $J \times J$ grid with a time horizon $n$. Each cell contains varying quantities of 2 different resources $\textsf{r}_1(x, y), \textsf{r}_2(x, y) \geq 0$, both of which are collected (at most once) by either players entering. Players begin in the center and alternately choose to either move to an adjacent cell or stay put. Each $\mathsf{P}_i$ is only interested in resource $i$, and players agree to pool together resources when the game ends. \textsc{RC} gives $\mathsf{P}_1$ the opportunity to threaten $\mathsf{P}_2$ with going `on strike' if $\mathsf{P}_2$ does not move to the cells that $\mathsf{P}_1$ recommends. \textsc{RC} has approximately $\mathcal{O}({25^n})$ leaves. The grim strategy is for $\mathsf{P}_1$ to stay put. However, unlike \textsc{Tantrum}, computing $\LBV$ and $\UBV$ still requires search (at least for $\mathsf{P}_2$) at each state, which is still computationally expensive. We use as features (a) one-hot vector showing past visited locations, (b) the current coordinates of each player and whose turn it is (c) the amount of each resource collected, and (d) the number of rounds remaining.

\subsection{Experimental Setup}

\paragraph{Games with Fixed Parameters.} We run 3 sub-experiments. [\textbf{\textsc{RC}}] We experimented with \textsc{RC} with $J=7, n=4$ over 10 different games. Rewards $\textsf{r}_i$ were generated using a log-Gaussian process over $(x, y)$ to simulate spatial correlations (details in Appendix). 
We also report the payoffs from a `non-strategic' $\mathsf{P}_1$ which optimizes only for resources it collects, while letting $\mathsf{P}_2$ best respond. [\textbf{\textsc{Tantrum}}] We ran \textsc{Tantrum} with $n=25$, $q_1=1$ and $q_2$ chosen randomly. These games have $>1e12$ states; however, we can still obtain the optimal strategy due to the special structure of the game (note the subgame perfect equilibrium gives $\mathsf{P}_1$ zero payoff). [\textbf{\textsc{RC}+}] We ran $\textsc{RC}$ with $J=9$, $n=6$. Since $\game$ is large, we use approximates ($\tilde{\tau}$, $\AUBV$, $\ALBV$) obtained from $\tilde{\pi}_1^{\text{grim}}$ and $\tilde{\pi}^{\text{alt}}$. $\tilde{\pi}_1^{\text{grim}}$ is for $\mathsf{P}_2$ to stay put, while $\ALBV$ is obtained by applying search \textit{online} (i.e., when $s$ appears in training) for $\mathsf{P}_2$ starting from $s$. Thus $\tilde{\tau}(s)$ can also be computed online from $\ALBV$. $\tilde{\pi}^{\text{alt}}$ is obtained by running exact search to a depth of $4$ (counted from the root) and then switching to a greedy algorithm. On the rare occasion that $\AUBV(s) < \ALBV(s)$, we set $\AUBV(s) \leftarrow \ALBV(s)$. We report results in Figure~\ref{fig:fix_param_results}, which show the \textit{difference} between $\mathsf{P}_1$'s payoff for our method and (i) the optimal SEFCE, (ii) the subgame perfect Nash, and (iii) the non-strategic leader commitment. 

\paragraph{Featurized \textsc{Tantrum}.} We allow $q_1, q_2$ to \textit{vary} between stages of $\game$, giving vectors $\textbf{q}_i \in [1,\infty]^n$. Each trajectory uses different $\textbf{q}_i$, which we append as features to our network, alongside the payoffs already collected for each player. For training, we draw i.i.d. samples of $\textbf{q}_i^j \sim \exp(1) + 1$. The evaluation metric is $\kappa = R_1(\tilde{\pi})/\textsf{OPT}$, i.e., the ratio of $\mathsf{P}_1$'s payoffs under $\tilde{\pi}$ compared to the optimal $\pi$. For each $n$, we test on $100$ $\textbf{q}$-vectors not seen during training and compare their $\kappa$ against a `greedy' strategy which recommends $\mathsf{P}_2$ to accede as long as there are sufficient threats in the remainder of the game for $\mathsf{P}_1$ (details in Appendix). We also stress test $\tilde{\pi}$ on a different \textit{test} distribution $\hat{\textbf{q}}_i^j \sim \exp(1) + 4$. We report results in Figure~\ref{fig:featurized-results} and \ref{fig:featurized-results-2}.

\begin{figure}[t]
\centering
\begin{subtable}[c]{0.40\linewidth}
\setlength{\tabcolsep}{3.5pt}
\begin{tabular}{cccc}
        $n$ & $\overline{\kappa}$ & grd-$\overline{\kappa}$ & str-$\overline{\kappa}$ \\
        \hline \hline
        5 & .993 & .828 & .997\\
        \hline
        6 & .982 & .773 & .982\\
        \hline
        7 & .968 & .778 & .921 \\
        \hline
        10 & .938 & .775 & .898
\end{tabular}
\subcaption{}
\label{fig:featurized-results}
\end{subtable}
\begin{subfigure}[c]{0.50\linewidth}
\includegraphics[width=0.9\linewidth]{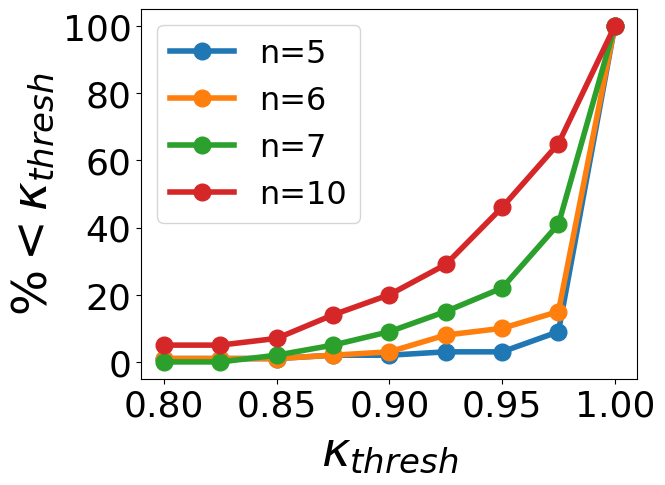}
\subcaption{}
\label{fig:featurized-results-2}
\end{subfigure}
    \label{fig:tantrum_featurized}
\caption{Results for Featurized \textsc{Trantrum} as depth $n$ varies, based on $\kappa$, the ratio of the leader's payoff to the true optimum. (a) grd-$\overline{\kappa}$ and str-$\overline{\kappa}$ denote results for the baseline greedy method and our results when stress tested with $\textbf{q}$ drawn from a distribution from training. (b) Proportion of trials which give $\kappa < \kappa_{\text{thresh}}$. }
\end{figure}
\subsection{Results and Discussion}  
For fixed parameter games whose optimal value can be computed, we observe near optimal performance which significantly outperforms other baselines. For [\textsc{RC}], the average value of each an improvement of .5 is approximately equal to moving an extra half move. In [\textsc{Tantrum}], the subgame perfect equilibrium is vacuous as $\mathsf{P}_1$ is unable to issue threats and gets a payoff of 0. In [\textsc{RC}+], we are unable to fully expand the game tree, however, we still significantly outperform the non-strategic baseline.

For featurized \textsc{Tantrum}, we perform near-optimally for small $n$, even when stress tested with out-of-distribution $\textbf{q}$'s (Figure~\ref{fig:featurized-results}). Performance drops as $n$ becomes larger, which is natural as EPFs become more complex. While performance degrades as $n$ increases, we still significantly outperform the greedy baseline. The stress test suggests that the network is not merely memorizing data. 

Figures~\ref{fig:epoch-comparison-100k} and \ref{fig:epoch-comparison-2M} shows the learned EPFs at the root for epochs 100k and 2M, obtained directly or from one-step lookahead. 
As explained in Section~\ref{sec:our_method}, we only learn the decreasing portions of EPFs. 
After 2M training epochs, the predicted EPFs and one-step lookahead mirrors the true EPF in the decreasing portions, which is not the case at the beginning. 
At the beginning of training, many knots (red markers) are wasted on learning the `useless' increasing portions on the left. After 2M epochs, knots (blue markers) were learning the EPF at the `useful' decreasing regions.

Figure~\ref{fig:threat-large-fa} gives an state in \textsc{Tantrum} whose EPF yields high loss even after training. This failure case is not rare since $\textsc{Tantrum}$ is large. Yet, the resultant action is still optimal---in this case the promise to $\textsf{P}_2$ was $\mu_2=-25.5$ which is precisely $\UBV(s)$. 
Like MDPs, policies can be near-optimal even with high Bellman losses in some states.
\section{Conclusion}
We proposed a novel method of performing FA on EPFs that allows us to efficiently solve for SEFCE. This is to the best of our knowledge, the first time a such an object has been learned from state features, leading to a FA-based method of solving Stackelberg games with performance guarantees.
We hope that our approach will help to close the current gap between solving zero-sum and general-sum games. 

\newpage

\section*{Acknowledgements}
This work was supported by NSF grant IIS-2046640 (CAREER).
\bibliography{bibliography}
\bigskip

\appendix
\appendix
\onecolumn
\section{Proof of Theorem~\ref{thm:FA_guarantee}}
\subsection{Preliminaries}
The proof follows 2 steps. First, we show that the estimate $\tilde{U}_s$ is not too different from the optimal $U_s$. 
Let the depth of a state $D(s)$ be the longest path needed to reach a leaf, i.e.,
\begin{align*}
    D(s) = \begin{cases}
        0 \qquad & s \in \mathcal{L} \\
        \max_{s' \in \mathcal{C}(s)} D(s') + 1 \qquad &\text{otherwise}
    \end{cases}.
\end{align*}
Since $\game$ is finite, $D = \max_{s\in\mathcal{S}}D(s)$ and is finite.

Let $\tilde{U}_s$ be our predicted EPF at state $s$.
For $s \in \mathcal{S} \backslash \mathcal{L}$, we denote (\textit{for this section only}) using shorthand
\begin{align*}
\tilde{U}'_s = \tilde{U}^{\text{target}}_s &= \begin{cases}
    \left[ \bigwedge_{s' \in \mathcal{C}(s)}\tilde{U}_{s'} \right] (\mu) & \text{ if } s \in \mathcal{S}_1 \\
    \left[ \bigwedge_{ s' \in \mathcal{C}(s)} \tilde{U}_{s'} \triangleright \tau(s') \right] (\mu) & \text{ if } s \in \mathcal{S}_2
    \end{cases}
\end{align*}
be what is obtained from one-step lookahead using Section~\ref{sec:review}, and for $s\in\mathcal{L}$,
\begin{align*}
    \tilde{U}_s (\mu_2) = \tilde{U}'_s (\mu_2) = \begin{cases}
        r_1(s) \qquad & \mu_2 = r_2(s) \\
        -\infty \qquad & \text{otherwise}
    \end{cases}.
\end{align*}
For clarity, we denote likewise for the exact EPFs $U'_s$ (which will be equal by definition to $U_s$).

\subsubsection{Domains of EPFs}
Given any real valued function $h: \mathbb{R}\mapsto\mathbb{R}$, we denote its domain by $\dom[h] = \{ x | h (x) > -\infty\}$. 
The following lemmas ensure that the required \textit{domains} all match. This theorem is added for completeness; the reader can skip over this section if desired.
\begin{lemma}
For all $s \in \mathcal{S}$,
$\dom[U_s] = \dom[U'_s] = \dom[\tilde{U}_s] = \dom[\tilde{U}'_s] = [\LBV(s), \UBV(s)]$,
\label{lem:dom_predicted}
\end{lemma}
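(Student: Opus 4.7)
The natural plan is to prove all four domain equalities simultaneously by induction on the depth $D(s)$, since the recursive definitions for $U'_s$ and $\tilde{U}'_s$ depend only on the (approximate or exact) EPFs of children, and the definitions of $\LBV$ and $\UBV$ via backward induction also depend only on children. The base case $s\in\mathcal{L}$ is immediate: $\LBV(s)=\UBV(s)=r_2(s)$ and every one of $U_s,U'_s,\tilde{U}_s,\tilde{U}'_s$ equals $r_1(s)$ at $r_2(s)$ and $-\infty$ elsewhere, so all four domains reduce to the singleton $\{r_2(s)\}=[\LBV(s),\UBV(s)]$.

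For the inductive step I would handle the four objects in pairs. The exact $U'_s$ and the approximate $\tilde{U}'_s$ obey the same recursion from Equation~\eqref{eq:sefce_recurse}, applied respectively to $U_{s'}$ and $\tilde{U}_{s'}$, each of which by the inductive hypothesis has domain $[\LBV(s'),\UBV(s')]$. Because upper concave envelope, left-truncation, and pointwise max all respect the convex-hull structure of domains, it therefore suffices to verify a single combinatorial identity in each of the two cases $s\in\mathcal{S}_1$ and $s\in\mathcal{S}_2$, independently of whether we are working with the exact or approximate EPF. For $s\in\mathcal{S}_1$, by definition of subgame-perfect grim/altruistic strategies the leader picks the child minimizing (resp.\ maximizing) $\mathsf{P}_2$'s value, so $\LBV(s)=\min_{s'\in\mathcal{C}(s)}\LBV(s')$ and $\UBV(s)=\max_{s'\in\mathcal{C}(s)}\UBV(s')$, and the convex hull of the children's domains is exactly $[\LBV(s),\UBV(s)]$.

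The follower case is the one requiring care. Here the domain of $U_{s'}\triangleright\tau(s')$ is $[\max(\LBV(s'),\tau(s')),\UBV(s')]$ when non-empty. The key observation is the identity
\begin{equation*}
\max\bigl(\LBV(s'),\tau(s')\bigr)=\max\Bigl(\LBV(s'),\max_{s^!\neq s'}\LBV(s^!)\Bigr)=\max_{s^!\in\mathcal{C}(s)}\LBV(s^!),
\end{equation*}
so the lower endpoint is the same constant $\max_{s^!}\LBV(s^!)$ for every child. Since the follower breaks ties by best response in the grim strategy, this constant equals $\LBV(s)$. One then shows the truncated domain is nonempty at the child $s^*\in\arg\max_{s'}\UBV(s')$ by noting $\UBV(s^*)\geq\UBV(s^!)\geq\LBV(s^!)$ for every $s^!$, so $\UBV(s^*)\geq\tau(s^*)$ and the upper endpoint of the envelope is $\UBV(s^*)=\UBV(s)$. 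This gives $\dom[U'_s]=\dom[\tilde{U}'_s]=[\LBV(s),\UBV(s)]$.

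Finally, $\dom[\tilde{U}_s]=[\LBV(s),\UBV(s)]$ follows directly from the network architecture: we set $x_1=\LBV(s)$ and $x_m=\UBV(s)$ and interpolate with $x_j\in[\LBV(s),\UBV(s)]$, declaring $\tilde{U}_s=-\infty$ outside. Combining this with $\dom[\tilde{U}'_s]=[\LBV(s),\UBV(s)]$ closes the induction, and analogously $\dom[U_s]=\dom[U'_s]$ by the exact recursion. I expect the main obstacle to be purely bookkeeping in the follower case — carefully verifying that the child whose truncated EPF is nonempty contributes the correct upper endpoint $\UBV(s)$, and that no spurious extension of the domain occurs when taking the upper concave envelope across children whose truncated domains may individually be empty.
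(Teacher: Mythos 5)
Your proposal is correct and follows essentially the same route as the paper's proof: the same case split on $s\in\mathcal{S}_1$ versus $s\in\mathcal{S}_2$, the same key identity $\max(\LBV(s'),\tau(s'))=\max_{s^!\in\mathcal{C}(s)}\LBV(s^!)=\LBV(s)$ with the same non-emptiness argument at the child maximizing $\UBV$, and the same appeal to the network architecture for $\dom[\tilde{U}_s]$. The only difference is presentational — you make the induction on depth explicit where the paper leaves it implicit.
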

The proofs are straightforward but trivial, hence they are deferred to Section~\ref{sec:appendix-domains}
\begin{proof}
The first equality was shown in Lemma~\ref{lem:dom_exact}. We can, in fact reuse the proof of Lemma~\ref{lem:dom_exact} for $\tilde{U}'_s$ and $\tilde{U}_s$. This completes this Lemma~\ref{lem:dom_predicted}.
\end{proof}
Hence, we no longer have to worry about mismatched domains. 
\subsubsection{Upper Concave Envelopes and Left Truncations}
First, observe that since $\tilde{U}_s$ is a one-dimensional function, $\bigwedge$ can be written alternatively as
\begin{align*}
    \left[\bigwedge_{s'\in\mathcal{C}(s)} U_{s'}\right] (\mu_2) &= 
    \max_{\substack{s', s'' \in \mathcal{C}(s) \\ t \in [0, 1]; \mu', \mu'' \in \mathbb{R} \\ t\mu' + (1-t)\mu'' = \mu_2}} t U_{s'}(\mu') + (1-t)U_{s''}(\mu''),
\end{align*}
that is, the maximum that could be obtained by interpolating between at most two points across 2 children states $s', s'' \in \mathcal{C}(s)$ (this follows from the fact that all $U$'s are 1-dimensional functions \cite{letchford2010computing}). 

Recall that $\tau(s')$ is defined only for $s' \in \mathcal{C}(s)$, where $s \in \mathcal{S}_2$. For convenience, we define $\beta(s')=\tau(s')$ when $s \in \mathcal{S}_2$ and $-\infty$ when $s \in \mathcal{S}_1$. $\beta$ plays the same role as $\tau$, since left truncating at $-\infty$ does not change anything, i.e., $f \triangleright (-\infty) = f$ for any $f: \mathbb{R} \mapsto \mathbb{R}$. Using $\beta$ allows us to perform a `dummy' left truncation when $s \in \mathcal{S}_1$ and avoid having to split into different cases.

\subsubsection{Our Goal}
Our goal is to show that assuming that the $L_\infty$ loss is low for all states $s$, i.e., 
\begin{align}
\max_{\mu_2 \in [\LBV(s), \UBV(s)]}|\tilde{U}_s(\mu_2)- \tilde{U}'_s(\mu_2)| \leq \epsilon
\label{eq:proof_assumption}
\end{align}
then we will enjoy good performance, i.e, the leader gets a payoff of order $\mathcal{O}(\epsilon D)$ less than optimal (additively).

\subsection{Learned EPFs are Approximately Optimal}
The first half of the theorem is to show that our learned EPFs $\tilde{U}_s$ are for all $s$, close to the true $U_s$ pointwise.
We prove the main theorem by strong induction on the states by increasing depth the following.
and \eqref{eq:sefce_recurse}. Our induction hypothesis is
\begin{align*}
    \mathcal{K}_j: &\quad \max_{\mu_2 \in [\LBV(s), \UBV(s)]} | \tilde{U}_s(\mu_2) - U_s (\mu_2) | \leq j \epsilon \qquad \forall s \text{ where } D(s) = j.
\end{align*}
By definition, $K_0$ satisfies our requirement since $s \in \mathcal{L}$. Thus, the base case is satisfied. Now we prove the inductive case. Assume that $\mathcal{K}_0, \cdots, \mathcal{K}_{j-1}$ are all satisfied. We want to show $\mathcal{K}_j$ using \eqref{eq:proof_assumption}.
\begin{lemma}
Let $s \in \mathcal{S}$ such that $D(s) = j$. Suppose $\mathcal{K}_{0}, \dots \mathcal{K}_{j-1}$ are true. Then we have $| \tilde{U}'_s(\mu_2) - U'_s (\mu_2) | \leq \epsilon(j-1)$ for all $\mu_2 \in [\LBV(s), \UBV(s)]$.
\label{lem:inductive}
\end{lemma}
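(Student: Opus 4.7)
My plan is to exploit the fact that both $U'_s$ and $\tilde{U}'_s$ arise from the \emph{same} one-step lookahead formula \eqref{eq:sefce_recurse}, applied respectively to the true children EPFs $U_{s'}$ and their approximations $\tilde U_{s'}$ for $s'\in\mathcal{C}(s)$, using the \emph{exact} threshold $\tau(s')$ on both sides (neither here uses an approximated $\tilde\tau$). If I can show that the two building-block operations---the upper concave envelope $\bigwedge$ and the left truncation $\triangleright t$---are both non-expansive in the pointwise (sup) norm, then the children's approximation error propagates to $s$ without amplification. Combined with the induction hypothesis $\mathcal{K}_0,\dots,\mathcal{K}_{j-1}$, which gives $\|\tilde U_{s'}-U_{s'}\|_\infty \le D(s')\,\epsilon \le (j-1)\,\epsilon$ for every $s'\in\mathcal{C}(s)$, this will yield the claimed bound.

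The non-expansiveness of $\triangleright$ is essentially immediate: since the same $\tau(s')$ is used on both sides, above the threshold each truncation agrees with the original, and below the threshold both are $-\infty$. For $\bigwedge$ I would use the characterization noted in the preliminaries, namely $[\bigwedge_{s'} f_{s'}](\mu) = \max\{\,t\,f_{s'}(\mu')+(1-t)\,f_{s''}(\mu'')\ :\ s',s''\in\mathcal{C}(s),\ t\in[0,1],\ t\mu'+(1-t)\mu''=\mu\,\}$. A pointwise perturbation of each $f_{s'}$ by at most $\delta$ perturbs every such convex combination by at most $\delta$, and since $\max$ preserves this bound in both directions, $\|\bigwedge_{s'} f_{s'} - \bigwedge_{s'} g_{s'}\|_\infty \le \max_{s'}\|f_{s'}-g_{s'}\|_\infty$. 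Applying truncation first and then the envelope (and noting that for $s\in\mathcal{S}_1$ the truncation step is vacuous, captured uniformly by $\beta(s')=-\infty$), I conclude $\|\tilde U'_s - U'_s\|_\infty \le \max_{s'\in\mathcal{C}(s)}\|\tilde U_{s'}-U_{s'}\|_\infty \le (j-1)\,\epsilon$.

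The main subtlety I anticipate is handling the $-\infty$ values when proving non-expansiveness of $\bigwedge$: near the boundary of a child's domain, one risks comparing a finite value against $-\infty$. This is precisely where Lemma~\ref{lem:dom_predicted} is crucial; it certifies that $\dom[\tilde U_{s'}]=\dom[U_{s'}]=[\LBV(s'),\UBV(s')]$ for every child $s'$, so the feasible set of triples $(t,\mu',\mu'')$ in the characterization above is identical for the true and approximate envelopes, and the argument reduces to a clean ``linearity of convex combinations plus monotonicity of $\max$'' calculation. Once the lemma is in hand, the outer induction closes by the triangle inequality: combining $\|\tilde U'_s - U'_s\|_\infty \le (j-1)\,\epsilon$ with the training assumption \eqref{eq:proof_assumption} gives $\|\tilde U_s - U_s\|_\infty \le j\,\epsilon$, which is $\mathcal{K}_j$; the second half of the theorem then translates this EPF error into the $\mathcal{O}(D\epsilon)$ regret bound on $R_1$ via one-step lookahead along the extracted trajectory.
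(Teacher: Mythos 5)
Your proposal is correct and follows essentially the same route as the paper's own proof: the paper's two-case contradiction argument (comparing the true and approximate objectives at each other's argmax parameters $\hat{\sigma}$, $\tilde{\sigma}$) is exactly the standard proof of the sup-norm non-expansiveness of $\bigwedge \circ \triangleright$ that you state directly, and both arguments invoke the domain-matching lemma to rule out spurious $-\infty$ comparisons. The only difference is presentational---you package the key step as a Lipschitz property of the lookahead operator rather than unfolding it into two symmetric contradiction cases.
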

\begin{proof}

Fix $\mu_2$. We want to show $| \tilde{U}'_s(\mu_2) - U'_s (\mu_2) | \leq \epsilon(j-1)$. 
Now, let $\hat{\sigma} = (\hat{s}',\hat{s}'',\hat{t}, \hat{\mu}', \hat{\mu}'')$ be the parameters which achieves the maximum
\begin{align}
\argmax_{\substack{s', s'' \in \mathcal{C}(s) \\ t \in [0, 1]; \mu', \mu'' \in \mathbb{R} \\ t\mu' + (1-t)\mu'' = \mu_2}} t [U_{s'}\triangleright \beta(s')](\mu') + (1-t)[U_{s''} \triangleright \beta(s'')] (\mu'')
\label{eq:proof_true_convex_argmax}
\end{align}
and similarly when we are working with learned EFPs, $\tilde{\sigma} = (\tilde{s}',\tilde{s}'',\tilde{t}, \tilde{\mu}', \tilde{\mu}'')$
\begin{align}
\argmax_{\substack{s', s'' \in \mathcal{C}(s) \\ t \in [0, 1]; \mu', \mu'' \in \mathbb{R} \\ t\mu' + (1-t)\mu'' = \mu_2}} 
t [\tilde{U}_{s'} \triangleright \beta(s')](\mu') + (1-t)[\tilde{U}_{s''} \triangleright \beta(s'')](\mu''),
\label{eq:proof_estimate_convex_argmax}
\end{align}
which are the arguments that optimized give
$U'_s(\mu_2)$ and
$\tilde{U}'_s(\mu_2)$
respectively. That is, $\hat{\sigma}$ and $\tilde{\sigma}$ gives how the point at the EPF with the x coordinate equal to $\mu_2$ is obtained as a mixture of at most 2 points from the upper convex envelope  (when $s'=s''$, we simply repeat the 2 points and set $t=1/2$ for simplicity). We proceed by showing a contradiction. We have two cases. 

\paragraph{Case 1.} Suppose that 
$\tilde{U}'_s(\mu_2) > U'_s(\mu_2) + \epsilon(j-1)$. Then we have
\begin{align}
    \begin{split}
        & \quad \left|
        \underbrace{\tilde{t} 
        [\tilde{U}_{\tilde{s}'} \triangleright \beta(\tilde{s}')](\tilde{\mu}') + (1-\tilde{t})[\tilde{U}_{\tilde{s}''} \triangleright \beta(\tilde{s}'')] (\tilde{\mu}'')}_{=\tilde{U}'_{s}(\mu_2) > U'_s(\mu_2)+\epsilon(j-1)} - 
        \underbrace{\tilde{t} 
        [U_{\tilde{s}'}\triangleright \beta(\tilde{s}')](\tilde{\mu}') + (1-\tilde{t})
        [U_{\tilde{s}''}\triangleright \beta(\tilde{s}'')]( \tilde{\mu}'')}_{\leq U'_s(\mu_2)}
        \right| \\ 
        &= 
        \left|
        \tilde{t} 
        \left( [\tilde{U}_{\tilde{s}'} \triangleright \beta(\tilde{s}')](\tilde{\mu}') 
        -[U_{\tilde{s}'}\triangleright \beta(\tilde{s}')](\tilde{\mu}')
        \right) + 
        (1-\tilde{t}) 
        \left( [\tilde{U}_{\tilde{s}''}\triangleright \beta(\tilde{s}'')]( \tilde{\mu}'') - 
        [U_{\tilde{s}''}\triangleright \beta(\tilde{s}'')](\tilde{\mu}'') 
        \right)
        \right| \\ 
        &\leq \epsilon (j-1) 
    \end{split}
\end{align}
and where first inequality inside $|\cdot|$ follows from our assumption in case 1, and the second from the fact that $U'_s(\mu_2)$ was taken from an argmax, i.e., \eqref{eq:proof_true_convex_argmax}. The third line holds from our induction hypothesis $\mathcal{K}_i$ \eqref{eq:proof_assumption}, where $i \in [0, j-1]$, the fact that $D(s') < D(s)=j$ and how the $\triangleright$ operator cannot increase the absolute error of the difference.\footnote{Note that since $\tilde{\sigma}$ is the argmax, we are guaranteed to not have values of $\tilde{\mu}$ that lie outside the domains of the truncated EPFs.} However, these $3$ inequalities cannot hold simultaneously, thus the assumption for Case 1 cannot be true, i.e., we must have $\tilde{U}'_s(\mu_2) \leq U'_s(\mu_2) + \epsilon(j-1)$

\paragraph{Case 2.} Suppose that $\tilde{U}'_s(\mu_2) < U'_s(\mu_2) - \epsilon(j-1)$. Then using a similar derivation we have
\begin{align}
    \begin{split}
        & \quad \left|
        \underbrace{\hat{t} [U_{\hat{s}'} \triangleright \beta(\hat{s}')](\hat{\mu}') + (1-\hat{t})[]U_{\hat{s}''}\triangleright \beta(\hat{s}'')]( \hat{\mu}'')}_{= U'_{s}(\mu_2) > \tilde{U}'_s(\mu_2)+\epsilon(j-1)} - 
        \underbrace{\hat{t} [\tilde{U}_{\hat{s}'}\triangleright \beta(\hat{s}')](\hat{\mu}') + (1-\hat{t})[\tilde{U}_{\hat{s}''} \triangleright \beta(\hat{s}'')](\hat{\mu}'')}_{\leq \tilde{U}'_s(\mu_2)}
        \right| \\
        &= \left| 
        \hat{t} \left( [U_{\hat{s}'}\triangleright \beta(\hat{s}')](\hat{\mu}') - [\tilde{U}_{\hat{s}'} \triangleright \beta(\hat{s}')](\hat{\mu}') \right) 
        + (1-\hat{t}) \left( [U_{\hat{s}''}\triangleright \beta(\hat{s}'')]( \hat{\mu}'') 
        -[\tilde{U}_{\hat{s}''}\triangleright \beta(\hat{s}'')](\hat{\mu}'')
        \right)
        \right| \\
        &\leq \epsilon (j-1) 
    \end{split}
\end{align}
where the first inequality inside $|\cdot|$ comes from case 2's assumption, the second is from the fact that $\tilde{U}_s'(\mu_2)$ was taken from an argmax, i.e., \eqref{eq:proof_estimate_convex_argmax}. The third line follows from the induction hypothesis $\mathcal{K}_i$, where $i\in [0, j-1]$ and the depth of $s$. These 3 inequalities cannot hold simultaneously, so our assumption in case 2 cannot be true and $\tilde{U}'_s(\mu_2) \geq U'_s(\mu_2) - \epsilon(j-1)$. 

Combining the result from both cases gives the desired result. 
\end{proof}

Now, we consider the $\mu_2$ which has the worst possible discrepancy, which gives
\begin{align}
\begin{split}
    &\max_{\mu_2 \in [\LBV(s), \UBV(s)]} | \tilde{U}_s(\mu_2) - U_s (\mu_2) | \\
    \leq& \max_{\mu_2 \in [\LBV(s), \UBV(s)]} 
    | \tilde{U}_s(\mu_2) - \tilde{U}'_s (\mu_2) | + 
    | \tilde{U}'_s(\mu_2) - U_s (\mu_2) | \\
    =& \max_{\mu_2 \in [\LBV(s), \UBV(s)]} 
    | \tilde{U}_s(\mu_2) - U'_s (\mu_2) | + 
    | \tilde{U}'_s(\mu_2) - U'_s (\mu_2) | \\
    \leq& \max_{\mu_2 \in [\LBV(s), \UBV(s)]} 
    | \tilde{U}_s(\mu_2) - \tilde{U}'_s (\mu_2) | + 
    \max_{\mu_2 \in [\LBV(s), \UBV(s)]}
    | \tilde{U}'_s(\mu_2) - U'_s (\mu_2) | \\
    \leq& 
    \max_{\mu_2 \in [\LBV(s), \UBV(s)]}
    | \tilde{U}'_s(\mu_2) - U'_s (\mu_2) | + \epsilon \\
    \leq& \epsilon j,
\end{split}
\label{eq:main_proof_midway}
\end{align}
where the second line follows from the triangle inequality, the third line using the equality between $U_s(\mu_2)$ and $U'_s(\mu_2)$, the fourth from the fact that $\max_x |f(x) + g(x)| \leq \max_x|f(x)| + \max_y |g(y)|$, the fifth from the assumption \eqref{eq:proof_assumption}, the the last line from Lemma~\ref{lem:inductive}. 
The main theorem follows by induction on $D(s)$, the fact that $D(s)$ is bounded by the depth of the tree, and the fact that the base case $\mathcal{K}_0$ is trivially true.
\begin{theorem}
If $L_\infty (\tilde{U}_s, \tilde{U}'_s) \leq \epsilon$ for all $s \in \mathcal{S}$, then $\max_{\mu_2 \in [\LBV(s), \UBV(s)]} | \tilde{U}_s(\mu_2) - U_s (\mu_2) | \leq \epsilon D$, where $D$ is the depth of the game.
\label{thm:ours_almost_accurate}
\end{theorem}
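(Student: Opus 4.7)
The plan is strong induction on the depth $D(s)$, with induction hypothesis $\mathcal{K}_j$ as stated: for every $s$ with $D(s)=j$, the pointwise error $|\tilde U_s(\mu_2) - U_s(\mu_2)|$ is at most $j\epsilon$ across $\dom[U_s] = [\LBV(s), \UBV(s)]$. The base case $j=0$ (leaves) is immediate since $\tilde U_s$ and $U_s$ agree by definition on the two-point set $\{r_2(s)\}$ and both equal $-\infty$ elsewhere. For the inductive step, I would split the error via the triangle inequality $|\tilde U_s - U_s| \le |\tilde U_s - \tilde U'_s| + |\tilde U'_s - U'_s|$, using that $U'_s = U_s$ by definition. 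The first term is at most $\epsilon$ by the assumption on the $L_\infty$ loss. The second term is where the bulk of the work lies and is handled by the auxiliary Lemma~\ref{lem:inductive}, which pushes the bound from the children to the parent, losing nothing in the process except a factor of $(j-1)\epsilon$ inherited directly from the inductive hypothesis on the children.

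To prove the key lemma, I would exploit that in one dimension the upper concave envelope $\bigwedge$ (and its left truncation $\triangleright \beta(s')$) can be computed by interpolating between at most two children EPFs. Fix any $\mu_2 \in [\LBV(s),\UBV(s)]$, and let $\hat\sigma = (\hat s', \hat s'', \hat t, \hat\mu', \hat\mu'')$ be the maximizing interpolation tuple for the true $U'_s(\mu_2)$, and $\tilde\sigma$ the analogous tuple for $\tilde U'_s(\mu_2)$. A two-sided contradiction argument then works: if $\tilde U'_s(\mu_2) > U'_s(\mu_2) + (j-1)\epsilon$, I plug $\tilde\sigma$ into the true objective — this is a feasible choice for the true maximization, so yields a lower bound on $U'_s(\mu_2)$, yet the two interpolations differ by at most $(j-1)\epsilon$ child-by-child (by induction), giving the contradiction. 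The symmetric argument with $\hat\sigma$ plugged into the $\tilde U$ objective handles the other side. Summing the two bounds gives $|\tilde U'_s - U'_s| \le (j-1)\epsilon$, which combined with the $\epsilon$ from the loss assumption closes the induction at $j\epsilon$. The theorem then follows by taking $j = D$.

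The main obstacle I anticipate is bookkeeping around the domains and the truncation operator $\triangleright$. Specifically, when I transplant $\tilde\sigma$ into the true objective (or vice versa), I must know that $\tilde\mu',\tilde\mu''$ still lie in the truncated domains of the swapped EPFs, so that no $-\infty$ value contaminates the inequality chain. This is exactly where Lemma~\ref{lem:dom_predicted} becomes indispensable: it certifies that $\dom[U_s] = \dom[\tilde U_s] = [\LBV(s),\UBV(s)]$ and likewise for $U'_s, \tilde U'_s$, so the interpolation points found for one system are automatically admissible for the other. A secondary bit of care is the unified treatment of $\mathcal{S}_1$ and $\mathcal{S}_2$ vertices, which I would streamline by defining $\beta(s') = \tau(s')$ when the parent is a follower node and $-\infty$ otherwise, so that the recursion for $\tilde U'_s$ and $U'_s$ takes a single form and truncation is inert at leader nodes.

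Finally, translating the pointwise EPF error bound $D\epsilon$ into a payoff bound $|R_1(\tilde\pi) - R_1(\pi^*)| = \mathcal{O}(D\epsilon)$ (i.e., the version stated as Theorem~\ref{thm:FA_guarantee}) follows by evaluating $\tilde U_{\mathrm{root}}$ at its argmax to produce the policy payoff and comparing with the true $U_{\mathrm{root}}$ at the same $\mu_2$; incentive compatibility (Theorem~\ref{thm:incentive}) guarantees the extracted $\tilde\pi$ actually realizes $\tilde U_{\mathrm{root}}(\mu_2)$, so the gap is controlled by the pointwise EPF error, yielding the $\mathcal{O}(D\epsilon)$ bound.
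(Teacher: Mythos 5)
Your proposal matches the paper's proof essentially step for step: strong induction on $D(s)$ with the same hypothesis $\mathcal{K}_j$, the triangle-inequality split through $\tilde{U}'_s$, the two-sided contradiction argument transplanting the argmax tuples $\hat\sigma$ and $\tilde\sigma$ between the true and learned objectives, the unified $\beta$ truncation, and the reliance on the domain lemma to rule out $-\infty$ contamination. The only caveat is your final paragraph, which concerns the separate payoff bound of Theorem~\ref{thm:FA_guarantee} rather than the stated result and glosses over the additional induction on $\tilde{Q}_s$ that the paper needs to show the extracted policy actually realizes $\tilde{U}_{\mathrm{root}}$ up to $\mathcal{O}(D\epsilon)$; for the statement at hand, your argument is complete and correct.
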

\begin{proof}
This follows by the definition of $L_\infty$ and the above derivations.
\end{proof}
\subsection{Leader's Payoffs from Induced Strategy is Close-To-Expected}
\label{sec:appendix_induced_gives_close_payoffs}
Theorem~\ref{thm:ours_almost_accurate} tells us that our EPF everywhere is close (pointwise) to the true EPF if $\epsilon$ is small. Now we need to establish the suboptimality when \textit{playing} according to $\tilde{\pi}$, which is a joint policy implicit from $\tilde{U}_s$. We begin with some notation. 

Let $\tilde{Q}_s(\mu_2) : [\LBV(s), \UBV(s)] \mapsto \mathbb{R}$ be the payoff to $\mathsf{P}_1$ assuming we started at state $s$, promised a payoff of $\mu_2$ to $\mathsf{P}_2$ and used the approximate EPFs $\tilde{U}_s$ for all descendent states $s' \sqsupseteq s$ (the domain precludes unfulfilled promises). That is, given $\tilde{\sigma} = (\tilde{s}',\tilde{s}'',\tilde{t}, \tilde{\mu}', \tilde{\mu}'')$ given by
\begin{align}
\argmax_{\substack{s', s'' \in \mathcal{C}(s) \\ t \in [0, 1]; \mu', \mu'' \in \mathbb{R} \\ t\mu' + (1-t)\mu'' = \mu_2}} t [\tilde{U}_{s'}\triangleright \beta(s')](\mu') + (1-t)[\tilde{U}_{s''}\triangleright \beta(s'')](\mu''),
\end{align}
the induced policy is to play to $\tilde{s}'$ with probability $\tilde{t}$ and a consequent promise of $\tilde{\mu}'$, as well as playing to  $\tilde{s}''$ with probability $(1-\tilde{t})$ and a promised payoff of $\tilde{\mu}''$. By definition, if $s \in \mathcal{L}$, $\tilde{Q}_s = \tilde{U}_s = U_s$ trivially. We also have the following recursive equations for a given $s \not \in \mathcal{L}$, $\mu_2$
\begin{align}
    \tilde{Q}_s(\mu_2) &= \tilde{t} \tilde{Q}_{\tilde{s}'}(\tilde{\mu}') + (1-\tilde{t})\tilde{Q}_{\tilde{s}''}(\tilde{\mu}'').
    \label{eq:q_recurse}
\end{align}

\begin{theorem}
$\left| \tilde{Q}_s(\mu_2) - \tilde{U}_s(\mu_2) \right| \leq \epsilon D$ for all $s \in \mathcal{S}$ and for all $\mu_2 \in [\LBV(s), \UBV(s)]$.
\label{thm:Q_diff_U}
\end{theorem}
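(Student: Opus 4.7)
\textbf{Proof plan for Theorem~\ref{thm:Q_diff_U}.} The plan is to prove the stronger per-state statement $|\tilde{Q}_s(\mu_2)-\tilde{U}_s(\mu_2)|\le \epsilon D(s)$ by strong induction on the depth $D(s)$, then conclude the theorem by noting $D(s)\le D$ for all $s$. The base case $D(s)=0$, i.e.\ $s\in\mathcal{L}$, is immediate from the definition of $\tilde{Q}_s = \tilde{U}_s$ at leaves.

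For the inductive step, fix $s\notin\mathcal{L}$ and $\mu_2\in[\LBV(s),\UBV(s)]$, and let $\tilde{\sigma}=(\tilde{s}',\tilde{s}'',\tilde{t},\tilde{\mu}',\tilde{\mu}'')$ be the argmax used to define the induced policy. The key move is the triangle-inequality split
\[
|\tilde{Q}_s(\mu_2)-\tilde{U}_s(\mu_2)| \;\le\; |\tilde{Q}_s(\mu_2)-\tilde{U}'_s(\mu_2)| \;+\; |\tilde{U}'_s(\mu_2)-\tilde{U}_s(\mu_2)|.
\]
The second term is directly bounded by $\epsilon$ using the hypothesis $L_\infty(\tilde{U}_s,\tilde{U}'_s)\le\epsilon$. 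For the first term, I use the recursion \eqref{eq:q_recurse} together with the identity $\tilde{U}'_s(\mu_2)=\tilde{t}\,[\tilde{U}_{\tilde{s}'}\triangleright\beta(\tilde{s}')](\tilde{\mu}')+(1-\tilde{t})\,[\tilde{U}_{\tilde{s}''}\triangleright\beta(\tilde{s}'')](\tilde{\mu}'')$ coming from the definition of $\tilde\sigma$. Because $\tilde\sigma$ is the maximizer, both $\tilde{\mu}'$ and $\tilde{\mu}''$ lie in the respective post-truncation domains, so the truncation operator is inert and $[\tilde{U}_{\tilde{s}'}\triangleright\beta(\tilde{s}')](\tilde{\mu}')=\tilde{U}_{\tilde{s}'}(\tilde{\mu}')$, and similarly for $\tilde{s}''$. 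Subtracting and applying the triangle inequality gives
\[
|\tilde{Q}_s(\mu_2)-\tilde{U}'_s(\mu_2)| \le \tilde{t}\,|\tilde{Q}_{\tilde{s}'}(\tilde{\mu}')-\tilde{U}_{\tilde{s}'}(\tilde{\mu}')| + (1-\tilde{t})\,|\tilde{Q}_{\tilde{s}''}(\tilde{\mu}'')-\tilde{U}_{\tilde{s}''}(\tilde{\mu}'')|.
\]
Since $D(\tilde{s}'),D(\tilde{s}'')\le D(s)-1$, the inductive hypothesis bounds each summand by $\epsilon(D(s)-1)$. Combining with the first split yields $|\tilde{Q}_s(\mu_2)-\tilde{U}_s(\mu_2)|\le \epsilon(D(s)-1)+\epsilon=\epsilon D(s)$, closing the induction.

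The main obstacle is a book-keeping one rather than a conceptual one: I must check that the promised sub-payoffs $\tilde{\mu}',\tilde{\mu}''$ chosen by the induced policy lie inside the domains $[\LBV(\tilde{s}'),\UBV(\tilde{s}')]$ and $[\LBV(\tilde{s}''),\UBV(\tilde{s}'')]$, so that (a) the induction hypothesis can be invoked at those promises and (b) the truncation $\triangleright\beta$ can legitimately be dropped when identifying $\tilde{U}'_s(\mu_2)$ with the convex combination above. Both facts follow from Lemma~\ref{lem:dom_predicted} and the fact that $\tilde\sigma$ is an argmax (so it cannot select promises outside the effective domain, which would produce $-\infty$ values). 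Once this domain bookkeeping is settled, the argument is a clean two-term triangle inequality combined with induction on $D(s)$, and the final statement of the theorem follows by replacing $D(s)$ with the global depth $D$.
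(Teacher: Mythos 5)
Your proposal is correct and follows essentially the same route as the paper: strong induction on the depth $D(s)$, the triangle-inequality split through $\tilde{U}'_s$, the FA assumption bounding $|\tilde{U}'_s-\tilde{U}_s|$ by $\epsilon$, and the expansion of $\tilde{Q}_s$ via the argmax recursion with the induction hypothesis applied at the children. The domain bookkeeping you flag is exactly the point the paper handles via Lemma~\ref{lem:dom_predicted} and the observation that the argmax cannot select $-\infty$ values.
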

\begin{proof}
The proof is given by strong induction on the $D(s)$ again. Let 
\begin{align}
    \mathcal{H}_j : \quad \left| \tilde{Q}_s(\mu_2) - \tilde{U}_s(\mu_2) \right| \leq \epsilon j \quad \forall s \text{ where } D(s) = j
\end{align}
By definition $\mathcal{H}_0$ is true. Now let us suppose that $\mathcal{H}_0 \dots \mathcal{H}_{j-1}$ is true. We have for $s \in \mathcal{S}, D(s) = j$,
\begin{align*}
    &  \left| \tilde{Q}_s(\mu_2) - \tilde{U}_s(\mu_2) \right| \\
    \leq& \left| \tilde{Q}_s(\mu_2) - \tilde{U}'_s(\mu_2) \right| + 
    \left| \tilde{U}'_s(\mu_2) - \tilde{U}_s(\mu_2) \right| \\
    \leq& \left| \tilde{Q}_s(\mu_2) - \tilde{U}'_s(\mu_2) \right| + 
    \epsilon \\
    =& \left| \tilde{t} \tilde{Q}_{\tilde{s}'}(\tilde{\mu}') + (1-\tilde{t})\tilde{Q}_{\tilde{s}''}(\tilde{\mu}'') - 
    \tilde{t} [\tilde{U}_{\tilde{s}'}\triangleright \beta(\tilde{s}')](\tilde{\mu}') - (1-\tilde{t})[\tilde{U}_{\tilde{s}''} \triangleright\beta(\tilde{s}'')](\tilde{\mu}'')
    \right| + \epsilon\\
    \leq& 
    \tilde{t}
    \underbrace{\left| 
     \tilde{Q}_{\tilde{s}'}(\tilde{\mu}') - \tilde{U}_{\tilde{s}'}(\tilde{\mu}') 
    \right|}_{\leq \epsilon (j-1)} + 
    (1-\tilde{t})
    \underbrace{\left|
    \tilde{Q}_{\tilde{s}''}(\tilde{\mu}'') - \tilde{U}_{\tilde{s}''}(\tilde{\mu}'')
    \right|}_{\leq \epsilon (j-1)} + \epsilon\\
    \leq& \epsilon j
\end{align*}
The second line follows from the triangle inequality. The third line follows from our FA assumption \eqref{eq:proof_assumption}. The fourth line follows from expansion of the definitions of $\tilde{Q}_s$ and $\tilde{U}'_s$, i.e., \eqref{eq:q_recurse} and \eqref{eq:proof_estimate_convex_argmax}
 The fifth line follows the induction hypothesis and the fact that $s', s'' \in \mathcal{C}(s)$ have at least one lower depth than $s$. Also, the truncation operator never causes any element to exceed domain bounds (which would give $-\infty$ values). By strong induction $\mathcal{H}_j$ is true for all $j \in [0, D]$ and the theorem follows through directly.
\end{proof}
\subsection{Piecing Everything Together}
Let $\mu_2^* = \argmax_{\mu_2} U_{\text{root}}(\mu_2)$, i.e., the promise given to the follower \textit{at the root} under the optimal policy $\pi^*$. Let $\tilde{\mu}_2 = \argmax_{\mu_2} \tilde{U}_\text{root} (\mu_2) $, which is the promise to be given to the follower \textit{at the root} under $\tilde{\pi}$. We have
\begin{align*}
    & U_\text{root} (\mu_2^*) - \tilde{Q}_\text{root}(\tilde{\mu}_2) \\
    =& \underbrace{U_\text{root} (\mu_2^*) - \tilde{U}_\text{root} (\mu_2^*)}_{|\cdot| \leq \epsilon D}
    + \underbrace{\tilde{U}_\text{root} (\mu_2^*)}_{\leq \tilde{U}_\text{root}(\tilde{\mu}_2)} - \tilde{Q}_\text{root}(\tilde{\mu}_2) \\
    \leq& \epsilon D + \left| \tilde{U}_\text{root}(\tilde{\mu}_2) - \tilde{Q}_\text{root}(\tilde{\mu}_2) \right|\\
    \leq& 2\epsilon D.
\end{align*} 
The inequalities in the second line come from Theorem~\ref{thm:ours_almost_accurate} and the definition of $\tilde{\mu}_2$; specifically that it is taken over the argmax. The last line comes from Theorem~\ref{thm:Q_diff_U}. To complete the proof, we simply observe that $\tilde{Q}_\text{root}(\tilde{\mu}_2)$ is precisely $r_1(\tilde{\pi})$ by definition.

\section{Proof of Theorem~\ref{thm:FA_guarantee_approx}}
The proof is essentially the same as presented in Section~\ref{sec:appendix_induced_gives_close_payoffs}, except that we are working with approximate bounds rather than strict ones. We have, using the new bounds,

For $s \in \mathcal{S} \backslash \mathcal{L}$, we denote using shorthand
\begin{align*}
\tilde{U}'_s = \tilde{U}^{\text{target}}_s &= \begin{cases}
    \left[ \bigwedge_{s' \in \mathcal{C}(s)}\tilde{U}_{s'} \right] (\mu) & \text{ if } s \in \mathcal{S}_1 \\
    \left[ \bigwedge_{ s' \in \mathcal{C}(s)} \tilde{U}_{s'} \triangleright \tilde{\tau}(s') \right] (\mu) & \text{ if } s \in \mathcal{S}_2
    \end{cases}
\end{align*}
be what is obtained from one-step lookahead using Section~\ref{sec:review}, and for $s\in\mathcal{L}$,
\begin{align*}
    \tilde{U}_s (\mu_2) = \tilde{U}'_s (\mu_2) = \begin{cases}
        r_1(s) \qquad & \mu_2 = r_2(s) \\
        -\infty \qquad & \text{otherwise}
    \end{cases}.
\end{align*}
This is the same as the case with exact $\LBV_s, \UBV_s$, but with a stricter truncation $\tilde{\tau}(s')$ for each $s' \in \mathcal{C}(s)$. We define $\tilde{\beta}$ just like before: $\tilde{\beta}(s')=\tilde{\tau}(s')$ when $s \in \mathcal{S}_2$ and $-\infty$ when $s \in \mathcal{S}_1$. 
We follow along the same way as Theorem~\ref{thm:FA_guarantee} in Section~\ref{sec:appendix_induced_gives_close_payoffs}.

Let $\tilde{Q}_s(\mu_2) : [\ALBV(s), \AUBV(s)] \mapsto \mathbb{R}$ be the payoff to $\mathsf{P}_1$ assuming we started at state $s$, promised a payoff of $\mu_2$ to $\mathsf{P}_2$ and used the approximate EPFs $\tilde{U}_s$ for all descendent states $s' \sqsupseteq s$ (the domain precludes unfulfilled promises). That is, given $\tilde{\sigma} = (\tilde{s}',\tilde{s}'',\tilde{t}, \tilde{\mu}', \tilde{\mu}'')$ given by
\begin{align}
\argmax_{\substack{s', s'' \in \mathcal{C}(s) \\ t \in [0, 1]; \mu', \mu'' \in \mathbb{R} \\ t\mu' + (1-t)\mu'' = \mu_2}} t [\tilde{U}_{s'}\triangleright \tilde{\beta}(s')](\mu') + (1-t)[\tilde{U}_{s''}\triangleright \tilde{\beta}(s'')](\mu''),
\end{align}
the induced policy is to play to $\tilde{s}'$ with probability $\tilde{t}$ and a consequent promise of $\tilde{\mu}'$, as well as playing to  $\tilde{s}''$ with probability $(1-\tilde{t})$ and a promised payoff of $\tilde{\mu}''$. By definition, if $s \in \mathcal{L}$, $\tilde{Q}_s = \tilde{U}_s = U_s$ trivially. Just like before, we also have the following recursive equations for a given $s \not \in \mathcal{L}$, $\mu_2$
\begin{align}
    \tilde{Q}_s(\mu_2) &= \tilde{t} \tilde{Q}_{\tilde{s}'}(\tilde{\mu}') + (1-\tilde{t})\tilde{Q}_{\tilde{s}''}(\tilde{\mu}'').
    \label{eq:q_recurse_approx}
\end{align}

Let our induction hypothesis be
\begin{align}
    \mathcal{H}_j : \quad \left| \tilde{Q}_s(\mu_2) - \tilde{U}_s(\mu_2) \right| \leq \epsilon j \quad \forall s \text{ where } D(s) = j.
\end{align}
By definition $\mathcal{H}_0$ is true. Now let us suppose that $\mathcal{H}_0 \dots \mathcal{H}_{j-1}$ is true. We have for $s \in \mathcal{S}, D(s) = j$,
\begin{align*}
    &  \left| \tilde{Q}_s(\mu_2) - \tilde{U}_s(\mu_2) \right| \\
    \leq& \left| \tilde{Q}_s(\mu_2) - \tilde{U}'_s(\mu_2) \right| + 
    \left| \tilde{U}'_s(\mu_2) - \tilde{U}_s(\mu_2) \right| \\
    \leq& \left| \tilde{Q}_s(\mu_2) - \tilde{U}'_s(\mu_2) \right| + 
    \epsilon \\
    =& \left| \tilde{t} \tilde{Q}_{\tilde{s}'}(\tilde{\mu}') + (1-\tilde{t})\tilde{Q}_{\tilde{s}''}(\tilde{\mu}'') - 
    \tilde{t} [\tilde{U}_{\tilde{s}'}\triangleright \tilde{\beta}(\tilde{s}')](\tilde{\mu}') - (1-\tilde{t})[\tilde{U}_{\tilde{s}''} \triangleright \tilde{\beta}(\tilde{s}'')](\tilde{\mu}'')
    \right| + \epsilon\\
    \leq& 
    \tilde{t}
    \underbrace{\left| 
     \tilde{Q}_{\tilde{s}'}(\tilde{\mu}') - \tilde{U}_{\tilde{s}'}(\tilde{\mu}') 
    \right|}_{\leq \epsilon (j-1)} + 
    (1-\tilde{t})
    \underbrace{\left|
    \tilde{Q}_{\tilde{s}''}(\tilde{\mu}'') - \tilde{U}_{\tilde{s}''}(\tilde{\mu}'')
    \right|}_{\leq \epsilon (j-1)} + \epsilon\\
    \leq& \epsilon j
\end{align*}
The second line follows from the triangle inequality. The third line follows from our FA assumption \eqref{eq:proof_assumption}. The fourth line follows from expansion of the definitions of $\tilde{Q}_s$ and $\tilde{U}'_s$, i.e., \eqref{eq:q_recurse} and \eqref{eq:proof_estimate_convex_argmax}
 The fifth line follows the induction hypothesis and the fact that $s', s'' \in \mathcal{C}(s)$ have at least one lower depth than $s$. Also, the truncation operator never causes any element to exceed domain bounds (which would give $-\infty$ values). By strong induction $\mathcal{H}_j$ is true for all $j \in [0, D]$. Finally, we observe that $\tilde{Q}_s(\mu_2) = R_1(\tilde{\pi})$ when $\tilde{\mu}_2 = \argmax_{\mu_2} \tilde{U}_s(\mu_2)$. This completes the proof.

\section{Proof of Lemma~\ref{lem:dom_predicted}}
\label{sec:appendix-domains}
\begin{lemma}
For all $s \in \mathcal{S}$, $\dom[U_s]$ = $\dom[U's]=[\LBV(s), \UBV(s)]$.  
\label{lem:dom_exact}
\end{lemma}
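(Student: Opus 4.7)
The plan is to prove the lemma by strong induction on the depth $D(s)$. For the base case ($s \in \mathcal{L}$), both $U_s$ and $U'_s$ have the single-point domain $\{r_2(s)\}$ by the terminal definition given earlier, while the grim and altruistic strategies both trivially give $\LBV(s) = \UBV(s) = r_2(s)$. For the inductive step I first note that at internal nodes $U_s$ and $U'_s$ are defined by the same backward-induction recurrence \eqref{eq:sefce_recurse} built from children $U_{s'}$, so they have identical domains; it therefore suffices to establish $\dom[U_s] = [\LBV(s), \UBV(s)]$.

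The structural facts I will use are: (i) the upper concave envelope $\bigwedge$ of a family of functions with interval domains has as its domain the convex hull of the union of those intervals, and (ii) left-truncation $U_{s'} \triangleright \tau(s')$ intersects $\dom[U_{s'}]$ with $[\tau(s'), \infty)$. Combined with the inductive hypothesis $\dom[U_{s'}] = [\LBV(s'), \UBV(s')]$, this lets me read off $\dom[U_s]$ from \eqref{eq:sefce_recurse}. I then match this against the recursive definitions of $\LBV$ and $\UBV$ obtained by backward induction: at $s \in \mathcal{S}_1$ one has $\LBV(s) = \min_{s' \in \mathcal{C}(s)} \LBV(s')$ and $\UBV(s) = \max_{s' \in \mathcal{C}(s)} \UBV(s')$ (the leader chooses the child to harm or help $\mathsf{P}_2$); at $s \in \mathcal{S}_2$ both quantities equal $\max_{s' \in \mathcal{C}(s)}$ of the respective value, since the follower greedily maximizes its own payoff under both grim and altruistic play. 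The leader case is then immediate: the convex hull of $\bigcup_{s' \in \mathcal{C}(s)} [\LBV(s'), \UBV(s')]$ is $[\min_{s'} \LBV(s'), \max_{s'} \UBV(s')] = [\LBV(s), \UBV(s)]$.

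The main obstacle is the follower case, where the $\tau(s')$-truncations could, in principle, shrink or empty some child domains in an unruly way. I plan to handle the left endpoint by picking $s^* \in \argmax_{s' \in \mathcal{C}(s)} \LBV(s')$ and observing that $\tau(s^*) = \max_{s'' \neq s^*} \LBV(s'') \leq \LBV(s^*)$, so $U_{s^*}$ is effectively not truncated and contributes exactly $\LBV(s^*) = \LBV(s)$ as its left endpoint; for any other child $s'$, $\tau(s') \geq \LBV(s^*)$, so its truncated domain (when nonempty) starts no lower than $\LBV(s^*)$. Thus the left endpoint of the convex hull is exactly $\LBV(s)$. For the right endpoint I let $s^{**}$ achieve $\max_{s' \in \mathcal{C}(s)} \UBV(s') = \UBV(s)$; its truncated domain is nonempty because $\UBV(s^{**}) \geq \UBV(s) \geq \LBV(s) \geq \tau(s^{**})$, where the middle inequality uses that the altruistic value dominates the grim value pointwise. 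Hence the right endpoint of $\dom[U_s]$ is precisely $\UBV(s)$, closing the induction.
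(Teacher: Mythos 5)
Your proof is correct and follows essentially the same route as the paper: induction on depth, a case split on leader versus follower nodes, matching the endpoints of the (truncated) upper concave envelope against the backward-induction recurrences for $\LBV$ and $\UBV$, and using concavity/convex-hull structure to fill in the full interval. The only cosmetic difference is that you isolate the maximizers $s^*$ and $s^{**}$ to handle the follower-node truncations, whereas the paper computes the left endpoint of every nonempty truncated child domain uniformly as $\max_{s''}\LBV(s'')$; these are the same observation.
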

\begin{proof}
The first equality is by definition. We now show that 
$\dom[U_s] = [\LBV, \UBV]$ by definition. Consider the state $s$ we are applying \eqref{eq:sefce_recurse} to and the 2 possible cases.

\paragraph{Case 1: $s \in \mathcal{S}_1$.} By definition $\UBV(s) = \max_{s' \in \mathcal{C}(s)}\UBV(s')$, and $\LBV(s) = \min_{s' \in \mathcal{C}(s)} \LBV(s)$. First, observe that 
\begin{align}
\begin{split}
\max \left\{ \dom \left[ U'_s \right] \right\} 
&= \max \left\{ \dom \left[ \bigwedge_{s' \in \mathcal{C}(s)} U_{s'} \right]\right\} \\ 
&= \max_{s' \in \mathcal{C}(s)} \max \left\{ \dom \left[ U_{s'} \right] \right\} \\ 
&= \max_{s'\in\mathcal{C}(s)} \UBV(s') \\ 
&= \UBV(s),
\end{split}
\label{eq:dom_upper_case_1}
\end{align}
where the second line follows from the fact that the largest x-coordinate after taking the upper-concave-envelope is the largest of the largest-x coordinates over each $U_{s'}$. Similarly, we have
\begin{align}
\begin{split}
\min \left\{ \dom \left[ U'_s \right] \right\} 
&= \min \left\{ \dom \left[ \bigwedge_{s' \in \mathcal{C}(s)} U_{s'} \right]\right\} \\ 
&= \min_{s' \in \mathcal{C}(s)} \min \left\{ \dom \left[ U_{s'} \right] \right\} \\ 
&= \min_{s'\in\mathcal{C}(s)} \LBV(s') \\ 
&= \LBV(s)
,
\end{split}
\label{eq:dom_lower_case_1}
\end{align}
where the second line comes again from the fact that the lowest x-coordinate after taking upper concave envelopes is the smallest of all the smallest x-coordinates over each $U_{s'}$. Now, \eqref{eq:dom_upper_case_1} and \eqref{eq:dom_lower_case_1} established the lower and upper limits of $U'_s$. 
Since $U'_s$ is concave, for every $\min \left\{ \dom \left[ U'_s \right] \right\} \leq \mu_2 \leq \max \left\{ \dom \left[ U'_s \right] \right\} $ we have $U'_s (\mu_2) \geq \min \left( U'_s(\min \left\{ \dom \left[ U'_s \right] \right\}), U'_s(\max \left\{ \dom \left[ U'_s \right] \right\}) \right) > -\infty$. This completes Case 1.

\paragraph{Case 2: $s \in \mathcal{S}_2$.} By definition, $\UBV(s)=\max_{s' \in \mathcal{C}(s)}\UBV(s')$ and $\LBV(s) = \max_{s' \in \mathcal{C}(s)} \LBV(s')$ (note the difference with Case 1, since $\mathsf{P}_2$ decides the current action). We again work out upper and lower bounds of $\dom [U'_s]$. 
\begin{align}
\begin{split}
    \max \left\{ \dom \left[ U'_s \right] \right\} 
    &= \max \left\{ \dom \left[ \bigwedge_{s' \in \mathcal{C}(s)} U_{s'} \triangleright \tau(s') \right]\right\} \\ 
    &= \max_{s' \in \mathcal{C}(s)} \max \left\{ \dom \left[ U_{s'} \triangleright \tau(s') \right] \right\} \\ 
    &= \max_{s' \in \mathcal{C}(s)} \max  \left\{ [\LBV(s'), \UBV(s')] \cap [\tau(s'), \infty) \right\} \\ 
    &= \max_{s'\in\mathcal{C}(s)} \UBV(s') \\ 
    &= \UBV(s),
\end{split}
\label{eq:dom_upper_case_2}
\end{align}
where the fourth line follows from the fact that $= \max_{s'\in\mathcal{C}(s)} \UBV(s') \geq \max_{s^! \in \mathcal{C}(s); s^! \neq s'} \UBV(s^!) \geq 
\max_{s^! \in \mathcal{C}(s); s^! \neq s'} \LBV(s^!) = \tau(s')$ (i.e., that the highest x coordinate in $U'_s$ is never part of the left-truncation step). For any $s'\in\mathcal{C}(s)$,
\begin{align}
    \dom \left[ U_{s'} \triangleright \tau(s') \right]
    &= 
    \begin{cases}
        \emptyset \qquad & \max \left\{ \dom [U_{s'}] \right\} < \tau(s') \\
        \dom \left[ U_{s'} \right] \qquad & \tau(s') < \min \left\{ \dom \left[ U_{s'} \right] \right\} \\
        [\tau(s'), \max \left\{ \dom [U_{s'}] \right\}] \qquad & \text{otherwise}
    \end{cases}.
    \label{eq:dom_case_2_trunc_cases}
\end{align}
For $s'$ where $\dom \left[ U_{s'} \triangleright \tau(s') \right] \neq \emptyset$, we have
\begin{align}
    \min \left\{ \dom \left[ U_{s'} \triangleright \tau(s') \right] \right\} 
    &= \max_{s'' \in \mathcal{C}(s)} \min \left\{ \dom \left[ U_{s''}\right] \right\}.
    \label{eq:dom_case_2_const}
\end{align}
Note that this is not dependent on $s'$. Also, note that it cannot be the case that $\dom \left[ U_{s'} \triangleright \tau(s') \right] = \emptyset$ for \textit{all} $s'$. In particular, consider $s^* = \argmax_{s' \in \mathcal{C}(s)} \max \left\{\dom [U_{s'}] \right\}$, clearly, $\max \left\{\dom [U_{s'}] \right\} \geq \tau(s^*)$ so we do not wind up with the empty set in \eqref{eq:dom_case_2_trunc_cases}. For simplicity, let $\min  \emptyset = \infty$. Hence, we can write
\begin{align}
\begin{split}
    \min \left\{ \dom \left[ U'_s \triangleright \tau(s') \right] \right\} 
    &= \min \left\{ \dom \left[ \bigwedge_{s' \in \mathcal{C}(s)} U_{s'} \triangleright \tau(s')\right]\right\} \\ 
    &= \min_{s' \in \mathcal{C}(s); } \min \left\{ \dom \left[ U_{s'} \triangleright \tau(s') \right] \right\} \\ 
    &= \min_{s'\in\mathcal{C}(s)} \max_{s'' \in \mathcal{C}(s)} \min \left\{ \dom \left[ U_{s''}\right] \right\} \\ 
    &= \max_{s'' \in \mathcal{C}(s)} \min \left\{ \dom \left[ U_{s''}\right] \right\} \\ 
    &= \LBV(s).
\end{split}
\label{eq:dom_lower_case_2}
\end{align}
The first line is by definition. The second line uses the same argument as in case 1. The second line follows \eqref{eq:dom_case_2_const} and the fact that at least one $s^*$ exists. The last line follows from the definition of $\LBV(s)$. As with case 1, we use \eqref{eq:dom_upper_case_2}, \eqref{eq:dom_lower_case_2} and the fact that $U'_s$ is concave to show that $U'_{s}(\mu_2) > -\infty$ for $\mu_2 \in [\LBV(s), \UBV(s)]$. This completes the proof.
\end{proof}
We are now ready to tackle the proof of Lemma~\ref{lem:dom_predicted} (reproduced here):
For all $s \in \mathcal{S}$,
$$\dom[U_s] = \dom[U'_s] = \dom[\tilde{U}_s] = \dom[\tilde{U}'_s] = [\LBV(s), \UBV(s)],$$
\begin{proof}
The first equality was shown in Lemma~\ref{lem:dom_exact}. We can, in fact reuse the proof of Lemma~\ref{lem:dom_exact} by replacing $U'_s$ and $U_s$ with $\tilde{U}'_s$ and $\tilde{U}_s$. This completes this Lemma~\ref{lem:dom_predicted}.
\end{proof}

\section{Extensions to Games with Chance}
For games with chance, backups will involve infimal convolutions \cite{Cermak:2016:UCS:3015812.3015879}. Denote the set of chance nodes by $\mathcal{S}_{\mathsf{C}}$. For $s\in \mathcal{S}_{\mathsf{C}}$ and we denote the probability of transition for from $s$ to $s'$ to be $\pi_{\textsf{C}} (s', s)$. The set of equations at \eqref{eq:sefce_recurse} has to be augmented by the case where $s\in \mathcal{S}_{\mathsf{C}}$. In these cases, we have
\begin{align*}
    U_s(\mu) = \bigoplus_{s' \in \mathcal{C}(s)} \pi_{\textsf{C}}(s', s) U_{s'}(\mu/\pi_{\textsf{C}}(s', s)), 
\end{align*}
where $\bigoplus$ is the maximal-convolution operator (similar to the inf-conv operator for convex functions),
\begin{align*}
    f_1 \bigoplus f_2 (\mu) &= \sup_y \left\{ f_1(\mu-y) + f_2(y) | y\in \mathbb{R} \right\}.
\end{align*}
Refer to \cite{Cermak:2016:UCS:3015812.3015879} for more details as to why $\bigoplus$ is the right operator to be used. It is well known that $\bigoplus$ can be efficiently implemented (linear in the number of knots) when functions are piecewise linear concave (sort the line segments in all $f_i$ based on gradients and stitch these line segments together in ascending order of gradients). Thankfully, this holds for EPFs. Furthermore, applying $\bigoplus$ to piecewise linear concave functions gives another piecewise linear concave function. Hence, EPFs of $\tilde{U}_s$ for each state and trained again using the $L_\infty$ loss. Theorem~\ref{thm:FA_guarantee} still holds with some minor additions (we omit the proof in this paper).

\section{Comparison between EPFs between SSE and SEFCE}
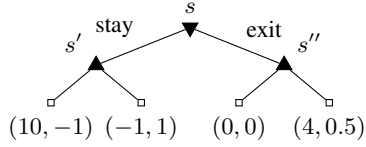
\begin{figure}
    \centering
    \centering
    \begin{tikzpicture}[scale=1,font=\footnotesize]
\tikzstyle{leader}=[regular polygon,regular polygon sides=3,draw,inner sep=1.2,fill=black];
\tikzstyle{follower}=[regular polygon,regular polygon sides=3, rotate=180,draw,inner sep=1.2, fill=black];
\tikzstyle{terminal}=[draw,inner sep=1.2];
\tikzstyle{level 1}=[level distance=5mm,sibling distance=25mm]
\tikzstyle{level 2}=[level distance=5mm,sibling distance=12mm]

\node(0)[follower,label=below:{$s$}]{}
    child{node(1)[leader, label=above left:$s'$]{}
        child{node(3)[terminal, label=below:{$(10,-1)$}]{}
        }
        child{node(4)[terminal, label=below:{$(-1,1)$}]{}
        }
        edge from parent node[above left]{stay}
    }
    child{node(2)[leader, label=above right:$s''$]{}
        child{node(5)[terminal, label=below:{$(0,0)$}]{}
        }
        child{node(6)[terminal, label=below:{$(4,0.5)$}]{}
        }
        edge from parent node[above right]{exit}
    };
\end{tikzpicture}
    \phantomcaption{}
    \label{fig:sefce_eg}
    \caption{Sample game for difference between SSE and SEFCE.}
\end{figure}
\begin{figure}[t]
    \begin{subfigure}{.3\textwidth}
    \centering
    \includegraphics[width=\textwidth]{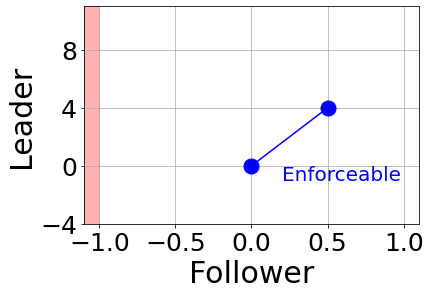}
    \phantomsubcaption{}
    \label{fig:demo2-2}
    \end{subfigure}
    \begin{subfigure}{.3\textwidth}
    \centering
    \includegraphics[width=\textwidth]{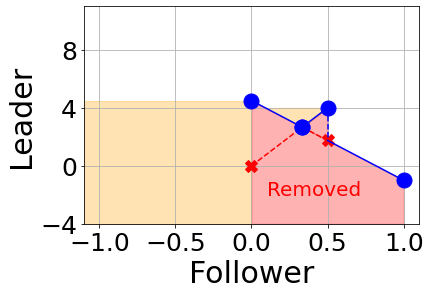}
    \phantomsubcaption{}
    \label{fig:demo2-3}
    \end{subfigure}
    \begin{subfigure}{.3\textwidth}
    \centering
    \includegraphics[width=\textwidth]{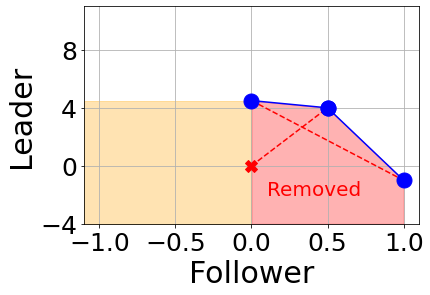}
    \phantomsubcaption{}
    \label{fig:demo2-4}
    \end{subfigure}
    \caption{From left to right: EPFs based on the game in Figure~\ref{fig:sefce_eg}. (a) Enforceable EPF at $s''$, (b) EPF for SSE at $s$, (c) EPF for SEFCE at $s$. The EPF at $s'$ is the same as that in Figure~\ref{fig:demo-2}.}
    \label{fig:demo2}
\end{figure}
One of the key disadvantages of EPFs in SSE is that they could be non-concave, or worst still, discontinuous. Consider the game in Figure~\ref{fig:sefce_eg} and the EPFs in Figure~\ref{fig:demo2}. 
In Figure~\ref{fig:demo2-3}, we can see that the EPF is neither concave or even continuous.
This is because the SSE takes pointwise maximums at follower nodes and not upper-concave envelopes. On the other hand, Figure~\ref{fig:demo2-4} shows the EPF of an SEFCE. Here, it is much better behaved, being a piecewise linear concave function.Furthermore, as mentioned in Theorem~\ref{thm:plc}, the EPF in SEFCE dominates (is always higher or equal to at all x-coordinates) the EPF of SSE. This means the SEFCE can give $\mathsf{P}_1$ more payoff than SSE. Also, every SSE is an SEFCE, but no vice versa.

See \cite{bovsansky2017computation} for a breakdown of computational complexity for different classes of games, (e.g., correlation/signaling (correlated, pure, behavioral), whether there is chance, and different levels of imperfect information).

\section{A Useful Way of Reasoning about SEFCE}
For readers who are more familiar with SSE or are uncomfortable with the `correlation' present in SEFCE, we give an easy interpretation of SEFCE in perfect information games. Given $\game$, consider a modified game $\game'$, where before a follower vertex $s$, we add a leader vertex $s'$ just before it with \textit{two} actions, where each action leads to a copy of the game rooted at the follower vertex $s$. This construction is repeatedly performed in a bottom-up fashion. After this entire process is completed, we will find the \textit{SSE} of $\game'$ (which is a much larger game than $\game$).

The only purpose of this leader vertex is to allow mixing between follower strategies. Now, the recommendation to the follower is explicit via $s'$. Each action in $s'$ corresponds to a single recommended action for the follower at $s$. Note that only a binary signal is needed (since mixing will only occur between at most 2 follower actions). Let $s_a$ and $s_b$ be duplicate follower vertices. Crucially, the leader is allowed to commit to different strategies for each subgame following $s_a$ and $s_b$. The SSE in $\game'$ can be mapped to the SEFCE in $\game$. Since in SSE, best responses are pure, and the probabilities leading to $s_a$ and $s_b$ (from the leader signaling node) give the probabilities at state $s$ in $\game$ for the (pure) action to be taken at $s_a$ and $s_b$.

We reiterate that this construction is not one that is practical, but rather one to help to gain intuition for the SEFCE. The solution using the constructed game is the same as the SEFCE by running the algorithm in Section~\ref{sec:review} and seeing how the $\bigwedge$ operator (for SSE) at the added root in $\game'$ mimics the follower vertex in SEFCE.

\begin{figure}
    \centering
    \begin{tikzpicture}[scale=1,font=\footnotesize]
\tikzstyle{leader}=[regular polygon,regular polygon sides=3,draw,inner sep=1.2,fill=black];
\tikzstyle{follower}=[regular polygon,regular polygon sides=3, rotate=180,draw,inner sep=1.2, fill=black];
\tikzstyle{terminal}=[draw,inner sep=1.2];

\tikzstyle{level 1}=[level distance=5mm,sibling distance=50mm]
\tikzstyle{level 2}=[level distance=5mm,sibling distance=25mm]
\tikzstyle{level 3}=[level distance=5mm,sibling distance=12mm]

\node(root)[leader,label=above: $\text{Augmented signaling vertex}$]{}
child{node(0)[follower,label=below:{$s_a$}]{}
    child{node(1)[leader, label=above left:$s_a'$]{}
        child{node(3)[terminal, label=below:{$(10,-1)$}]{}
        }
        child{node(4)[terminal, label=below:{$(-1,1)$}]{}
        }
        edge from parent node[above left]{stay}
    }
    child{node(2)[leader, label=above right:$s_a''$]{}
        child{node(5)[terminal, label=below:{$(0,0)$}]{}
        }
        child{node(6)[terminal, label=below:{$(4,0.5)$}]{}
        }
        edge from parent node[above right]{exit}
    }
}
child{node(0_)[follower,label=below:{$s_b$}]{}
    child{node(1_)[leader, label=above left:$s_b'$]{}
        child{node(3_)[terminal, label=below:{$(10,-1)$}]{}
        }
        child{node(4_)[terminal, label=below:{$(-1,1)$}]{}
        }
        edge from parent node[above left]{stay}
    }
    child{node(2_)[leader, label=above right:$s_b''$]{}
        child{node(5_)[terminal, label=below:{$(0,0)$}]{}
        }
        child{node(6_)[terminal, label=below:{$(4,0.5)$}]{}
        }
        edge from parent node[above right]{exit}
    }
};
\end{tikzpicture}
    \caption{Example of duplicated follower vertices based off the game in Figure~\ref{fig:sefce_eg}.}
    \label{fig:dup-EFCE}
\end{figure}
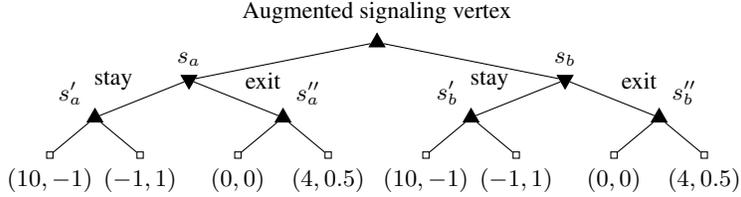

\section{Implementation Details}
\subsection{Borrowing Techniques from RL and FVI}
We employ target networks \cite{mnih2015human}. Instead of performing gradient descent on the `true' loss, we create a `frozen' copy of the network which we use the compute $\tilde{U}'_s$ (i.e., $\tilde{U}_s^{\text{target}}$). However, $\tilde{U}$ is still computed from the main network (with weights to be updated in gradient descent). The key idea is that the target $\tilde{U}'_s$ is no longer update at every epoch, which can destabilize training. We update the target network with the main network once every $2000$ episodes.

For larger games, we noticed that the bulk of loss was attributed to a small fraction of states. To focus attention on these states, we employ prioritized replay \cite{schaul2015prioritized}. We set the probability of selecting each state $s$ in the replay buffer to be proportionate to the square root of the last loss, i.e., $L(U_s, U'_s)^\alpha$ observed. We used $\alpha=0.5$ for convenience (\cite{schaul2015prioritized} suggest a value of 0.7).

Finding out the best hyperparameters for these add-ons is beyond the scope of this paper and left as future work.

\subsection{Modified Loss Function}
Our experience is that $L_\infty$ does manage to learn EPFs well, however, learning can be slow and sometimes unstable. Our hypothesis is that slow learning is due to the fact only the point responsible for the loss, as well as its neighbors has its coordinates updated during training. This very `local' learning of $\tilde{U}_s$ makes learning slow, particularly at the start of training. Second, we found that rather than $L_\infty$, using the \textit{square} of the largest absolute pointwise difference tends to stabilize training (though Theorem~\ref{thm:FA_guarantee} would have to be modified to be in terms of $\sqrt{\epsilon}$ instead). 

Let $U_s$ and $U'_s$ be two EPFs represented by $k_1$ and $k_2$ knots. Let $X_1 = \{ x_1, \dots x_{k_1} \}$ and $X_2 = \{ x'_1, \dots x'_{k_2} \}$ be the x-coordinates of the knots in $U_s$ and $U'_s$ respectively. Then, we use the following loss
\begin{align}
    \begin{split}
        L(U_s, U'_s) &= \sum_{x \in X_1 \cup X_2} \left( U_s(x) -U'_s(x) \right)^2.
    \end{split}
\end{align}
This loss still avoids costly promises since we are still taking pointwise differences (rather than over an integral). 

\subsection{Practical Implementation of Upper Concave Envelope and Left-Truncation}
Theoretically, finding the upper concave envelope of $k$ points can be found in linear time. However, this algorithm requires a significant number of backtracking and if-else statements, making this implementation unsuitable for batch operations on a GPU. The alternative which we employ runs in $\mathcal{O}(k^3)$ time which is in practice much faster when run on a GPU. For every distinct pairs of points $(x_i, y_i)$, and $(x_j, y_j)$, we check, for every point $(x_a, y_a)$ where $x_a \in [x_i, x_j]$ whether $(x_a, y_a)$ lies below or above the line segment $(x_i, y_i), (x_j, y_j)$. If a point $(x_a, y_a)$ is below \textit{any} such line segment, we flag it as `not included', indicating that the upper concave envelope will not include this point. The overall scheme is complicated (see attached code) but runs significantly faster than the linear time method when batch sizes are greater than 32. An important downside, however is (a) the amount of \textit{GPU} memory used for intermediate calculations and (b) the poor scaling (cubic) in terms of number of knots (which is $\beta m$, where $\beta$ is the branching factor and $m$ the number of knots) per EPF.

\paragraph{Dealing with different number of actions at each vertex and truncated points.} Rather than removing points and padding them (to make each batch fit nicely in rectangular tensor), we maintain a `mask' matrix which indicates that such a point is inactive. These points will not be used in computation of upper concave envelopes (both as potential points and as part of a line segment). Furthermore, this scheme makes it convenient to truncate points (simply mask those truncated points out and perform interpolation to get the new point on at $\tau(s')$).

\subsection{Training Only Using Decreasing Portions of EPF}
Kearning the increasing portion of an EPF is not useful, since points there are Pareto dominated. Hence, when extracting $\tilde{\pi}$ we will never select those points. For example in Figure~\ref{fig:demo2-3} and \ref{fig:demo2-4}, if there were parts of the EPF in the yellow regions, they would not matter since the leader would select the maximum point with x-coordinate at $0$ instead.

If we instead consider a slight variation of the EPF $U_s : \mathbb{R} \mapsto \mathbb{R} \cup \{ -\infty \}$ that gives the maximum leader payoff given the follower gets a payoff of \textit{at least} $\mu_2$ (rather than exactly $\mu_2$). This slight change ensures that EPFs are never increasing, while keeping all of the properties we proved earlier on. Omitting the increasing portions saves us from wasting any of the $m$ knots on the increasing portions, and instead focus on the decreasing portion (where there is a real trade-off between payoffs between $\mathsf{P}_1$ and $\mathsf{P}_2$. One example of this is shown in Figure~\ref{fig:epoch-comparison-100k} and Figure~\ref{fig:epoch-comparison-2M}, where we showed the `true' EPF and the modified EPFs that we use for our experiments. 

We describe what happens concretely. Let $\tilde{U}'_s$ be computed based on \eqref{eq:sefce_recurse} with its representation given by the set of knots $\{(x_1, y_1), \dots ,(x_k, y_k)\}$, assumed to be sorted in ascending order of $x$-coordinates, and where $x_1 = \LBV(s)$. Let the $j = \argmax_{i} y_i$. Then, the set of points which we use for training is the modified set $\{(\LBV(s), y_j), (x_j, y_j), (x_{j+1}, y_{j+1}), \dots (x_k, y_k)\}$.

\subsection{Sampling of Training Trajectories}
One of the design decisions in FVI is how one should sample states, or trajectories. In the single-player setting, it is commonplace to use some form of $\epsilon$-greedy sampling. In our work, we use an even simple sampling scheme which takes actions uniformly at random. 

There are a few exceptions. For \textsc{RC} sampled \textit{states} uniformly at random. This was made possible because the game was small and we could enumerate all states. The implication is that each leaf is sampled much more frequently than from actual trajectories. Our experience is that since the game is small, getting samples from uniform trajectories should still work well. For \textsc{Tantrum}, the game has a depth of $50$. In many cases, states in the middle are not learning anything meaningful because their children EPFs have not been learned well. As such, we adopt a `layered' approach, where initially we only allow for states $s$ at most $d_{\text{max}}$ to be added to the replay buffer, $d_\text{max}$ is gradually increased as training goes on. This helps EPFs to be learned for states deeper in the tree first before their parents.
We find that for \textsc{Tantrum} (the non-featurized version with $n=25)$, this was essential to get stable learning of EPFs (recall that in our setting, \textsc{Tantrum} has a size of roughly $\sim 3^{25}$ and a depth of $50$. A uniform trajectory leaves some states to be sampled with probability $1/2^{50}$). We start off at $d_{\text{max}}=20$ and reduce $d_\text{max}$ by $1$ for every $50000$ epochs. For other games, uniform trajectories work well enough since the game is not too deep.

\section{Additional Details on Experimental Setup}
\subsection{Environment Details}
For all our experiments, the network is a multilayer fully connected network of width $128$, depth $8$, ReLU activations and number of knots $m=8$. We used the PyTorch library \cite{paszke2017automatic} and a GPU to accelerate training. No hyperparameter tuning was done. 
\subsection{\textsc{RC} Map Generation Details}
Maps were generated with each reward map being drawn independently from a log Gaussian process (with query points given by the $(x, y)$ coordinates on the grid). We use the square-exponential kernel, a length scale of 2.0 and a standard deviation of 0.1. This way of generating maps was to encourage spatial smoothness in rewards for more realism. Figure~\ref{fig:sample_reward_maps} give examples of maps generated using this procedure. From the figures, one cans see that good regions for $\mathsf{P}_1$ may not be good for $\mathsf{P}_2$ and vice versa.

\begin{figure}
    \centering
    \begin{subfigure}{0.4 \textwidth}
    \includegraphics[width=\textwidth]{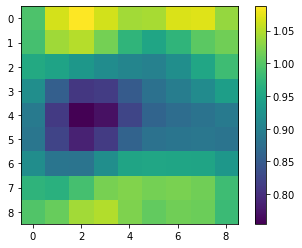}
    \phantomcaption{}
    \label{fig:sample_reward_map_rew_1}
    \end{subfigure}
    \begin{subfigure}{0.4 \textwidth}
    \includegraphics[width=\textwidth]{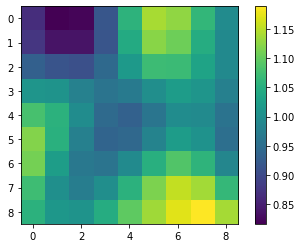}
    \phantomcaption{}
    \label{fig:sample_reward_map_rew_2}
    \end{subfigure}
    \caption{Left to right: An example of reward maps used for $\mathsf{P}_1$ and $\mathsf{P}_2$ in \textsc{RC}+.}
    \label{fig:sample_reward_maps}
\end{figure}

\subsection{\textsc{Tantrum} Generation}
For [\textsc{Tantrum}], the values of $q_2$ were chosen (somewhat arbitrarily) to be in $\{1.5, 2.1, 3.4, 5.1, 6.7\}$. 

\subsection{Training Hyperparameters}
We used the Adam optimizer (Kingma and Ba) with AMSGrad (Reddi et. al.) with a learning rate of 1e-5 (except for $\textsc{RC}$, where we found using a learning rate of 1e-4 was more suitable). We use the implementation provided in the PyTorch \cite{paszke2017automatic} library. The replay buffer was of a size of 1M. The minibatch size was set to 128. The target network's parameters was updated once every $2000$ training epochs. 
\subsection{Number of Epochs and Termination Criterion}
Unfortunately, it is very rare to have a small loss for every single state. Hence, we terminate training at a fixed iteration. The number of epochs are given in Table~\ref{tbl:expt-differences}.

\begin{table}[]
    \centering
    \begin{tabular}{ccccc}
         & Num Epochs & State sampling method & Prioritized replay & Traj. frequency \\
         \hline \hline
         \textsc{RC}& 2M & Random state & N/A & N/A\\
         \hline 
         \textsc{Tantrum}& 4M & Unif. trajectory, layered & Yes & 10\\
         \hline
         Feat. \textsc{Tantrum}& 2.7M & Uniform trajectory & Yes & 10\\
         \hline
         \textsc{RC}+ & 1.7M & Uniform trajectory & Yes & 20
    \end{tabular}
    \caption{Differences in experimental setups over each game. Traj. frequency refers to how many epochs before we sample a new trajectory.}
    \label{tbl:expt-differences}
\end{table}

\section{Qualitative Discussion of Optimal Strategies in \textsc{Tantrum}}
We explore \textsc{Tantrum} in the special case when $q_1=1$ and $q_2 > 1$. Intuitively, we should `use' as many threats as possible. That is achieved by the leader committing to $(-1, -1)$ for all future states. If $\textsf{P}_1$ does that from the beginning, it will give $\mathsf{P}_2$ $-n$ (the number of times the stage game is repeated) payoff to each player. Naturally, one upper-bound on how much $\textsf{P}_1$ can get is $n/q_2$, that is, $\textsf{P}_2$ chooses to accede on average of $n/q_2$ times per playthrough. $\mathsf{P}_1$ cannot possible get more since that would lead to to $\mathsf{P}_2$ losing more than $n$ (which is the worst possible threat the leader can make from the beginning). 

In our experiments, this was indeed true, and can be achieved by the following strategy. Let $\pi$ be such that (a) the $\mathsf{P}_1$ plays to $(0, 0)$ at all leader vertices and $\mathsf{P}_2$ accedes for the first $j = \lfloor n / q_2 \rfloor$ stages with probability $1$. At the $j+1$-th vertex, it plays a mixed strategy (or rather, it receives the recommendation to mix) strategies, with probability $(n - jq_2)/q_2$ it accedes. Clearly, the expected payoff for $\mathsf{P}_2$ is $-n$, and $\mathsf{P}_2$ cannot do any better.

However, this result does not hold for all settings, typically when $n$ is small. This is because we need to consider the threat is strong enough at every stage. Consider the case where $n=3$ and $q_2=2$. Our derivation suggests that at the first stage, the follower accedes with probability 1 and at the second stage, it accedes with probability $0.5$. 

At the first stage, $\mathsf{P}_2$ is indeed incentivized to accede. since if it will suffer from $-3$ if not, since acceding yields $-2$ payoff, which when combined with the expected payoff of $-1$ in the future, is equivalent to the threat of $-3$. At the second stage, it is just barely incentive compatible for the follower to accede. Specifically, if the player accedes, it will receive a payoff of $-4$ (it has already accumulated $-2$ from the previous stage). On the other hand, the grim trigger threat gives a payoff of only $-4$ ($-2$ from the past and $-2$ from the future). Hence, after receiving the recommendation $\mathsf{P}_2$ is just incentivized to not deviate. However, when $q_2$ is increased by just a little (say to $2.1$), this incentive is not sufficient. The \textbf{future} threat from not acceding is $-2$, but the follower already loses $2.1$ from acceding. \footnote{This phenomena is very similar to the difference between a coarse correlated equilibrium and a regular CE. The difference is that a player has to decide to deviate before or after receiving its recommended actions.}

In general, for our derived bound to be tight, we will require 
\begin{align*}
    \underbrace{n-\lfloor n/q \rfloor}_{\text{threat from grim trigger}} \geq \underbrace{q}_{\text{cost from acceding this time round}},
\end{align*}
that is, at the last accede recommendation (possibly with some probability), we still have enough rounds remaining as threats to maintain incentive compatibility. Technically we require this for all previous rounds; however this condition being satisfied for the last round implies that it is satisfied for all previous rounds. 

We can also see from this discussion that in this repeated setting, it is always beneficial to recommend accede to the follower higher up the tree; this way, there is more room for the leader to threaten the follower with future $(-1, -1)$ actions.

\paragraph{Featurized \textsc{Tantrum}}
As far as we know, there is no simple closed form solution for featurized \textsc{Tantrum}.
\twocolumn

\end{document}